\documentclass[journal]{IEEEtran}


\usepackage{amsthm}
\usepackage[cmex10]{amsmath}
\usepackage{graphicx}
\usepackage{color}
\usepackage[tight,footnotesize]{subfigure}
\usepackage{amsfonts}
\usepackage{algorithmic}
\usepackage{algorithm}
\usepackage{graphicx}
\usepackage{subfigure}
\usepackage{bm}
\usepackage{times,color,amssymb,epsfig,psfrag,stfloats,enumerate}
\usepackage{cite}

\newtheorem{theorem}{Theorem}

\newtheorem{lemma}{Lemma}
\newtheorem{corollary}{Corollary}
\newtheorem{proposition}{Proposition}

\ifCLASSINFOpdf
\else
\fi
%
%

\hyphenation{op-tical net-works semi-conduc-tor}

\begin{document}
%
\title{Achievable Rate Region of Non-Orthogonal Multiple Access Systems with Wireless Powered Decoder}

\author{Jie~Gong~\IEEEmembership{Member,~IEEE}, Xiang~Chen~\IEEEmembership{Member,~IEEE}
\thanks{The authors are with Guangdong Key Laboratory of Information Security Technology, Sun Yat-sen University, Guangzhou 510006, Guangdong, China. Emails: \{gongj26, chenxiang\}@mail.sysu.edu.cn. Xiang Chen, as the corresponding author, is also with Key Lab of EDA, Research Institute of Tsinghua University in Shenzhen, Shenzhen 518057, Guangdong, China, and SYSU-CMU Shunde International Joint Research Institute (JRI).}
\thanks{The work is supported in part by the NSFC (No.~61501527), the EU's Horizon 2020 Research and Innovation Staff Exchange programme (No.~734325, TESTBED), the MOE-CMCC Joint Research Fund of China (MCM20160104), State's Key Project of Research and Development Plan (No.~2016YFE0122900-3), the Fundamental Research Funds for the Central Universities, Science, Technology and Innovation Commission of Shenzhen Municipality (No.~JCYJ20150630153033410), SYSU-CMU Shunde International Joint Research Institute, Guangdong Science and Technology Project (No.~2016B010126003), and 2016 Major Project of Collaborative Innovation in Guangzhou.}
}

\maketitle

\begin{abstract}
Non-orthogonal multiple access (NOMA) is a candidate multiple access scheme in 5G systems to simultaneously accommodate tremendous number of wireless nodes. On the other hand, RF-enabled wireless energy harvesting is a promising technology for self-sustained wireless devices. In this paper, we study a NOMA system where the near user harvests energy from the strong radio signal to power the information decoder. {Both constant and dynamic decoding power consumption models are considered. For the constant decoding power model, the achievable rate regions for time switching and power splitting are characterized in closed-form. A generalized scheme is proposed by combining the conventional time switching and power splitting schemes, and its achievable rate region can be found by solving two convex optimization subproblems.} For the dynamic decoding power model where the decoding power consumption is proportional to data rate, the achievable rate region can be found by a low-complexity search algorithm. Numerical results show that the achievable rate region of the generalized scheme is larger than those of the time switching scheme and power splitting scheme, and rate-dependent decoder design helps to enlarge the achievable rate region.
\end{abstract}

\begin{IEEEkeywords}
Non-orthogonal multiple access, successive interference cancellation, energy harvesting, wireless power transfer, decoding power consumption.
\end{IEEEkeywords}

\IEEEpeerreviewmaketitle

\renewcommand{\algorithmicrequire}{\textbf{Input:}}
\renewcommand{\algorithmicensure}{\textbf{Output:}}

\section{Introduction}
With the tremendous increase of mobile users for high-speed video services and machine-type communication nodes for Internet-of-Things (IoT) applications in the coming years, the next generation mobile communication systems (5G) will encounter severe scarcity of radio resources. Non-orthogonal multiple access (NOMA), in which the users access the network in the same frequency band and are distinguished in the power domain, is expected as one of the candidate solutions to greatly enhance the spectral efficiency \cite{dai2015NOMA}. The basic idea of NOMA (as proposed in \cite{saito2013non}) is that two users, a far user and a near user, access in the same frequency band, and are allocated with higher and lower transmit power, respectively. The far user directly decodes its desired information, while the near user adopts successive interference cancellation (SIC) to firstly decode and subtract the signal of the far user, and then decode the desired information for itself. Inspired by the fact that the near user receives strong radio signal (including desired signal and interference), it is promising to apply RF-enabled wireless energy harvesting technology \cite{krikidis2014simultaneous, ding2015application} for this near user, which facilitates the application of self-sustained devices in the IoT networks \cite{bi2016wireless}, such as sensor nodes and RFIDs.

NOMA has been extensively studied in the literature. Saito, et. al \cite{saito2013non} firstly proposed the concept for 5G, and system-level performance was evaluated in \cite{saito2013system}, which shows that both network capacity and  cell-edge user throughput can be improved compared with the conventional orthogonal multiple access. Uplink NOMA was considered in \cite{al2014uplink}, in which a novel subcarrier and power allocation algorithm was proposed to maximize the users' sum-rate. Taking user fairness into account, the power allocation problem was studied in \cite{timotheou2015fairness}. In multiuser scenario, user pairing was introduced, and its impact on the performance was characterized in \cite{ding2016impact}. Further, since the near user can decode the information of the far user, the authors in \cite{ding2015cooperative} proposed a cooperative NOMA scheme, i.e., the near user can serve as a relay to help the far user. The work was further extended to combine simultaneous wireless information and power transfer (SWIPT) \cite{liu2016cooperative, xu2017joint}. However, the previous work only considers transmit power consumption. In fact, the receiving circuit power consumption cannot be ignored in applications such as wireless sensors \cite{cui2005energy}. In NOMA system, SIC receiver in the near user consumes more energy than conventional receivers. A constant receive power model with ambient energy source is adopted in \cite{zhou2015outage}, which provides an intuition on how to efficiently allocate the harvested energy. Nevertheless, as wireless radio energy is controllable compared with ambient energy, how to split the received signal between energy harvesting and information decoding is an open problem.

{For RF-enabled wireless energy harvesting, two types of receiver structures, power splitting receiver and time switching receiver, have been initially proposed in \cite{zhang2013mimo} and extensively studied in \cite{liu2013wireless} and \cite{zhou2013wireless}.} The research on SWIPT further extends to multiuser OFDM systems \cite{ng2013wireless, zhou2014wireless}, relay networks \cite{zhu2015wireless}, and cellular network \cite{huang2014enabling}. Since the receiver can harvest energy from both desired signal and interference, strong interference will benefit wireless energy harvesting. Based on this, two-user interference channel and multi-user interference channel were studied in \cite{park2013joint} and \cite{lee2015collaborative}, respectively. However, most of the previous works mainly focus on characterizing the rate-energy tradeoff by adjusting the portion of received signal on energy harvesting and information decoding. Again, the harvested energy is mainly used for data transmission rather than receiving.

In this paper, we study a NOMA system composed of a base station (BS) and two users, where the near user is powered by wireless energy harvesting. It splits the received radio signal into two parts for energy harvesting and information decoding, respectively. The harvested energy is used to power the information decoding module. We aim to characterize the achievable rate region $(R_1, R_2)$ of the near user (with rate $R_1$) and the far user (with rate $R_2$). A generalized energy harvesting scheme based on the conventional time switching and power splitting is proposed. To obtain the achievable rate region, the problem of maximizing the achievable rate of UE 2 given the target rate of UE 1 is formulated and solved under both constant and dynamic decoding power assumptions. With constant decoding power consumption, two special cases, time switching scheme and power splitting scheme, are firstly studied and closed-form solutions are obtained. Then the analysis is extended to the generalized scheme, where the rate maximization is a convex optimization problem, and hence can be efficiently solved by a numerical search algorithm. Numerical results show that the generalized scheme achieves a larger rate region compared with the conventional time switching and power splitting schemes. With dynamic decoding power consumption, both optimal and suboptimal algorithms are proposed, and the achievable rate regions with different parameter settings are analyzed.

The primary contributions of this paper are summarized as follows.
\begin{itemize}
\item {To the best of our knowledge, this is the first time considering decoding power consumption in wireless powered NOMA system. To characterize the achievable rate region, a rate maximization problem is formulated by maximizing the achievable rate of UE 2 under the constraint of UE 1's target rate.}
\item With constant decoding power model, the boundaries of the achievable rate regions for time switching scheme and power splitting scheme are characterized in closed-form. For time switching scheme, there exists a \emph{cut-off} line on the rate region. The rate pairs on the right side of this cut-off line are not achievable. For power splitting scheme, there is a stringent feasibility requirement on the channel quality of UE 1.
\item {A generalized energy harvesting scheme is proposed, for which the boundary of the achievable rate region with constant decoding power consumption can be obtained by solving two convex optimization subproblems.} It is proved that the optimal solution is the most \emph{economical} one, i.e., the harvested energy is equal to the information decoding power consumption. It is also shown that the generalized scheme has a larger rate region compared with the conventional schemes.
\item With dynamic decoding power model, an exhaustive search algorithm and an efficient suboptimal algorithm is proposed to find the boundary of the achievable rate region. Numerical results show that by trading rate with power consumption, the rate region can be greatly extended compared with that under constant power model.
\end{itemize}

The rest of the paper is organized as follows. Section \ref{sec:model} describes the system model and the problem formulation. The problems under constant and dynamic decoding power assumption are analyzed in Sections \ref{sec:const} and \ref{sec:dyn}, respectively. Extensive results and discussions are presented in Section \ref{sec:comp}. Finally, Section \ref{sec:concl} concludes the paper.

\section{System Model and Problem Formulation} \label{sec:model}
Consider a downlink wireless communication system as shown in Fig.~\ref{fig:system}, where a BS serves a near user (UE 1) and a far user (UE 2) in the same frequency band. Power domain NOMA scheme is applied in this system, i.e., the BS simultaneously transmits data to both users by superposition coding with different power levels. After receiving the coded signal, UE 2 directly decodes its desired information by viewing the interference from UE 1 as noise. UE 1 applies SIC scheme to improve its performance by firstly decoding and removing the signal for UE 2, and then decoding the desired signal for itself. By adjusting the power allocation between two users, per-user data rate can be flexibly controlled. UE 1 is also equipped with an energy harvester, in which the energy carried in the radio signal is harvested to power the signal receiving and decoding module. {As the received signal strength is weak, UE 2 is powered in other ways such as battery. The model can be extended to multi-user case by grouping users into pairs, each of which consists of a near user and a far user. The user pairs are allocated with orthogonal frequency bands, so that each pair can be managed independently. While how to group users into pairs can refer to \cite{ding2016impact}.}

\begin{figure}
\centering
\includegraphics[width=3.0in]{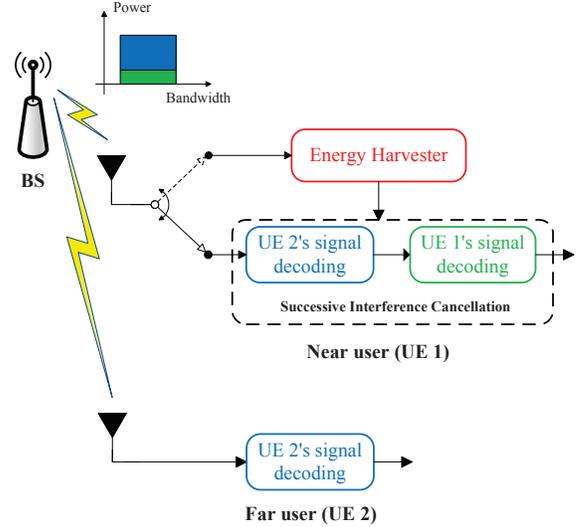}
\caption{A NOMA system with wireless powered SIC decoder in the near user.} \label{fig:system}
\end{figure}

\subsection{Signal Model}
The desired signal for UE $i \in \{1, 2\}$ is denoted by $x_i$ with mean $\mathbb{E}[|x_i|^2] = 1$. The superposition coded signal can be represented as
\begin{align}
x = \sqrt{P_1}x_1 + \sqrt{P_2}x_2,
\end{align}
where $P_i$ is the transmit power for UE $i$. The received signal at UE $i$ is
\begin{align}
y_i &= h_ix + n_i, \nonumber\\
&= h_i\sqrt{P_1}x_1 + h_i\sqrt{P_2}x_2 + n_i, \quad i = 1, 2,
\end{align}
where $h_i$ is the channel coefficient from the BS to UE $i$, $n_i$ is the additive white Gaussian noise (AWGN) with zero mean and variance $\sigma^2$. Assume that ${|h_1|^2} > {|h_2|^2}$, and the link quality $|h_1|^2$ is good enough so that UE 1 can be powered by wireless energy harvesting.

{It is remarkable that only transmit power adaptation is considered in this paper. Since the end-to-end energy transfer efficiency is low in current state-of-art \cite{zeng2017commun}, the BS is most likely equipped with multi-antenna to achieve highly directional power transmission, and hence, transmit beamforming needs to be considered. This paper can be viewed as a special case with fixed beamformer. Joint optimization of transmit power and beamformer for multi-antenna case is left for future work.
}

\subsection{Energy Harvesting Scheme and Constraints}
{In general, the conventional power splitting scheme may not work as the received power may be less than the power consumption of the information decoding module.} In this paper, a generalized scheme for energy harvester design is proposed, which combines time switching scheme and power splitting scheme. Specifically, each transmission slot is divided into two sub-slots. In the first sub-slot, UE 1 works as a \emph{pure} energy harvester and only the information for UE 2 is carried in the radio signal. Denote $P_2^{(1)}$ as the transmit power for UE 2 in the first sub-slot. {The BS's transmit power is subject to a peak power constraint, i.e.,}
\begin{align}
P_2^{(1)} \le P_{\mathrm{max}}, \label{eq:Pmax_1}
\end{align}
where $P_{\mathrm{max}}$ is the BS's power budget. According to information theory, the achievable rate of UE 2 in the first sub-slot can not exceed the channel capacity, i.e.,
\begin{align}
R_2^{(1)} \le \log_2 \bigg( 1+ \frac{|h_2|^2 P_2^{(1)}}{\sigma^2}\bigg). \label{eq:R2_1}
\end{align}

In the second sub-slot, UE 1 adopts power splitting scheme, i.e., a portion of signal power, denoted by $\rho (0 \le \rho \le 1)$, is split to energy harvester, and the other portion, denoted by $1-\rho$, is split to information decoder. Denote $P_1^{(2)}$ and $P_2^{(2)}$ as the power allocation for the respective users in the second sub-slot. Under the peak power constraint, we have
\begin{align}
P_1^{(2)} + P_2^{(2)} \le P_{\mathrm{max}}. \label{eq:Pmax_2}
\end{align}

Since the interference can be completely eliminated by SIC, UE 1 can be viewed as an interference-free receiver. Hence, its achievable rate satisfies
\begin{align}
R_1^{(2)} \le \log_2 \bigg( 1 + \frac{|h_1|^2 (1-\rho) P_1^{(2)}}{\sigma^2}\bigg). \label{eq:R1_2}
\end{align}

UE 2 decodes its desired signal by viewing the interference from UE 1 as noise. The achievable rate is constrained by
\begin{align}
R_2^{(2)} \le \log_2 \bigg( 1 + \frac{|h_2|^2P_2^{(2)}}{|h_2|^2P_1^{(2)}+\sigma^2}\bigg). \label{eq:R2_2}
\end{align}
In addition, to guarantee that SIC works, UE 1 should be able to successfully decode the data for UE 2. Thus, besides (\ref{eq:R2_2}), $R_2^{(2)}$ is also constrained by the decoding ability of UE 1, i.e.,
\begin{align}
R_2^{(2)} \le \log_2 \bigg( 1 + \frac{|h_1|^2 (1-\rho) P_2^{(2)}}{|h_1|^2 (1-\rho) P_1^{(2)}+\sigma^2}\bigg). \label{eq:R2_2sic}
\end{align}

Without loss of generality, assume the slot length is normalized to 1. 
Denote the length of the first sub-slot by $t (0 \le t \le 1)$, and that of the second sub-slot by $1-t$. The total harvested energy should be no smaller than the required energy for information decoding, i.e.,
\begin{align}
t\xi |h_1|^2  P_2^{(1)} \!+\! (1 \!-\! t) \rho \xi |h_1|^2 \big(P_1^{(2)} \!+\! P_2^{(2)}\big) \ge (1 \!-\! t)P_{\mathrm{SIC}}, \label{eq:Psic}
\end{align}
where $0 < \xi < 1$ is the energy harvesting efficiency, and $P_{\mathrm{SIC}}$ is the total power consumption for information decoding. {In this paper, a linear energy harvesting model ($\xi$ is constant) is adopted for analytical tractability. This model has been widely used in the literature. Although in practice, the relationship between received power and harvested power maybe non-linear \cite{boshkovska2015practical}, $\xi$ is approximately constant at low input power regime.}

The conventional time switching scheme can be viewed as a special case of the generalized scheme with $\rho = 0$, and the conventional power splitting scheme is a special case with $t = 0$. Later, we will extensively study generalized scheme as well as the two special cases.

\subsection{Problem Formulation}
In this paper, we aim to characterize the achievable rate regions of UE 1 and UE 2 under the proposed schemes, which can be represented as
\begin{align}
\mathcal{R} = \{&(R_1, R_2) \big| R_1 = (1-t)R_1^{(2)}, \nonumber\\
{}& R_2 = t R_2^{(1)}+(1-t)R_2^{(2)}, \nonumber\\
{}& 0 \le t \le 1, 0 \le \rho \le 1, \nonumber\\
{}& R_2^{(1)}\!, P_2^{(1)}\!, R_1^{(2)}\!,  R_2^{(2)}\!, P_1^{(2)}\!, P_2^{(2)} \textrm{~satisfy~(\ref{eq:Pmax_1})\textrm{-}(\ref{eq:Psic})} \}, \label{eq:region}
\end{align}
where all the variables are nonnegative. The achievable rate region can be described by the boundary curve which maps each feasible $R_1$ to the maximum of $R_2$. Thus, the boundary curve of the achievable rate region (\ref{eq:region}) can be found by maximizing $R_2$ for each $R_1 = r$, which is formulated as
\begin{subequations}\label{prob:mix}
\begin{align}
(\mathbf{P0}) \quad \max_{t, \rho, \bm{R}, \bm{P}} \;&\; t R_2^{(1)} + (1-t)R_2^{(2)} \label{prob:mixobj}\\
\mathrm{s.t.} \;&\; (1-t)R_1^{(2)} = r, \label{prob:R1cnstr}\\
\;&\; 0\le t < 1, 0 \le \rho < 1, \label{prob:mixrr}\\
\;&\; (\ref{eq:Pmax_1})\textrm{-}(\ref{eq:Psic}), \nonumber
\end{align}
\end{subequations}
where the optimization variables include $t$, $\rho$, $\bm{R} = [R_2^{(1)}, R_1^{(2)}, R_2^{(2)}]$ and $\bm{P} = [P_2^{(1)}, P_1^{(2)}, P_2^{(2)}]$. In addition, as $r=0$ is trivial, only the case $r>0$ needs to be studied. In this case, it can be easily found that $t \neq 1$ and $\rho \neq 1$ according to (\ref{prob:R1cnstr}). Thus, we have $t < 1$ and $\rho < 1$ as in (\ref{prob:mixrr}) so that the terms $\frac{1}{1-t}$ and $\frac{1}{1-\rho}$ which will appear in the following contents are well defined.

{Notice that according to (\ref{eq:Pmax_2}) and (\ref{eq:R1_2}), the feasible range of $r$ can be expressed as $0 \le r \le (1-t)\log_2 \big( 1 + \frac{|h_1|^2 (1-\rho) P_{\max}}{\sigma^2}\big)$, where $t$ and $\rho$ satisfy (\ref{eq:Psic}). There is no closed-form feasible range of $r$ in general, but for some special cases, the range can be explicitly given as detailed in the next section.}

In reality, the information decoding power consumption $P_{\mathrm{SIC}}$ can be modeled either constant or dynamic based on receiver circuit structure and decoding scheme. In the following sections, we will solve the problem ($\mathbf{P0}$) under different power consumption models.

\section{Constant Decoding Power Consumption Case} \label{sec:const}
In this section, we characterize the achievable rate region with constant power consumption model. This model has been widely used in the literature {(see \cite{wang2006realistic, imran2011energy} for example)} and well models the iterative decoding scheme with fixed number of iterations in real system. In addition, well-structured results can be obtained based on this model as shown later. These results are helpful for a better understanding of the relation between parameters and performance, and provide some intuitions for the design of real systems.

At first, with constant decoding power consumption, some quick observations on the problem ($\mathbf{P0}$) are summarized as follows.
\begin{lemma} \label{lemma:equal}
\textbf{(Equality Constraints)} To achieve the maximum of the problem ($\mathbf{P0}$), the constraints (\ref{eq:Pmax_1}), (\ref{eq:R2_1}), (\ref{eq:Pmax_2}), (\ref{eq:R1_2}) and (\ref{eq:R2_2}) (or (\ref{eq:R2_2sic})) are satisfied with equality.
\end{lemma}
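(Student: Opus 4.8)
The plan is to prove each equality by contradiction, exploiting that every constraint is monotone in the variable it controls and that the objective $tR_2^{(1)}+(1-t)R_2^{(2)}$ is strictly increasing in $R_2^{(2)}$ (since $t<1$, so $1-t>0$) and in $R_2^{(1)}$ whenever $t>0$. Throughout I fix an optimal solution $(t,\rho,\bm R,\bm P)$ and, for each allegedly slack constraint, build a feasible perturbation that strictly raises the objective.

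First I would dispose of the constraints that directly bound the two objective variables. Because $R_2^{(2)}$ enters the objective with positive weight and appears only as the left-hand side of the two upper bounds (\ref{eq:R2_2}) and (\ref{eq:R2_2sic}), any optimum must set $R_2^{(2)}$ equal to the smaller of the two right-hand sides; otherwise $R_2^{(2)}$ could be raised, strictly increasing the objective. This gives exactly the claim that the binding one of (\ref{eq:R2_2})/(\ref{eq:R2_2sic}) is tight. The identical argument applied to $R_2^{(1)}$ (weight $t$) forces (\ref{eq:R2_1}) to be tight for $t>0$; the case $t=0$ is the pure power-splitting special case in which the first sub-slot, and hence $R_2^{(1)}$ and $P_2^{(1)}$, are immaterial. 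For (\ref{eq:Pmax_1}), note that $P_2^{(1)}$ occurs only in (\ref{eq:R2_1}) and on the left side of the energy constraint (\ref{eq:Psic}), and increasing it relaxes both; hence it is without loss of optimality to set $P_2^{(1)}=P_{\mathrm{max}}$, i.e. (\ref{eq:Pmax_1}) holds with equality.

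The core of the argument is the coupled pair (\ref{eq:R1_2}) and (\ref{eq:Pmax_2}). The difficulty is that $R_1^{(2)}$ cannot be raised freely, since (\ref{prob:R1cnstr}) pins it at $R_1^{(2)}=r/(1-t)$; moreover $P_1^{(2)}$ plays two opposing roles, serving as useful signal power for UE 1 in (\ref{eq:R1_2}) but as interference to UE 2 in (\ref{eq:R2_2}) and (\ref{eq:R2_2sic}). To show (\ref{eq:R1_2}) is tight, I would argue that if it were slack one could decrease $P_1^{(2)}$ by a small $\epsilon>0$ while increasing $P_2^{(2)}$ by the same $\epsilon$. This leaves the sum $P_1^{(2)}+P_2^{(2)}$ unchanged, so (\ref{eq:Pmax_2}) and the left-hand side of (\ref{eq:Psic}) are preserved, while the right-hand sides of (\ref{eq:R2_2}) and (\ref{eq:R2_2sic}) both strictly increase (numerator up, denominator down). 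Thus $R_2^{(2)}$ can be raised, contradicting optimality, so (\ref{eq:R1_2}) must be an equality. With (\ref{eq:R1_2}) tight, (\ref{eq:Pmax_2}) follows by a simpler perturbation: if $P_1^{(2)}+P_2^{(2)}<P_{\mathrm{max}}$, increasing $P_2^{(2)}$ alone keeps (\ref{eq:R1_2}) tight (it is free of $P_2^{(2)}$), relaxes (\ref{eq:Psic}), and strictly increases the bounds (\ref{eq:R2_2}) and (\ref{eq:R2_2sic}), again permitting a larger $R_2^{(2)}$.

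I expect the main obstacle to be the coupling introduced by the harvested-energy constraint (\ref{eq:Psic}). Reducing the interference power $P_1^{(2)}$ in isolation, which is what one wants in order to enlarge $R_2^{(2)}$, would slacken the energy-harvesting side of (\ref{eq:Psic}) and could violate it; this is precisely why the perturbation establishing (\ref{eq:R1_2}) must be realized as a power-preserving swap between $P_1^{(2)}$ and $P_2^{(2)}$, keeping $P_1^{(2)}+P_2^{(2)}$ constant so that both (\ref{eq:Pmax_2}) and (\ref{eq:Psic}) remain feasible throughout the modification. Getting this bookkeeping right, together with the $t=0$ edge case, is the only delicate point; the remaining steps are routine monotonicity observations.
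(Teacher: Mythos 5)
Your proposal is correct and follows essentially the same route as the paper's proof: monotonicity of the objective in $R_2^{(1)}$ and $R_2^{(2)}$ forces (\ref{eq:R2_1}) and the binding one of (\ref{eq:R2_2})/(\ref{eq:R2_2sic}) to be tight, contradiction arguments via increasing $P_2^{(1)}$ and $P_2^{(2)}$ handle (\ref{eq:Pmax_1}) and (\ref{eq:Pmax_2}), and the tightness of (\ref{eq:R1_2}) follows from the objective being decreasing in $P_1^{(2)}$ (your explicit $\epsilon$-swap between $P_1^{(2)}$ and $P_2^{(2)}$ is the same observation the paper packages as ``the objective is decreasing in $P_1^{(2)}$ once $P_2^{(2)}=P_{\max}-P_1^{(2)}$''). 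Your explicit treatment of the $t=0$ edge case and of why the swap preserves (\ref{eq:Psic}) is slightly more careful than the paper's, but the argument is the same.
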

\begin{proof}
See Appendix \ref{proof:equal}.
\end{proof}

{Based on Lemma \ref{lemma:equal}, all the optimization variables can be represented in terms of $t$ and $\rho$. Specifically, as (\ref{eq:Pmax_1}) and (\ref{eq:R2_1}) are tightly satisfied, $P_2^{(1)}$ and $R_2^{(1)}$ become constant. $R_1^{(2)}$ can be directly written in terms of $t$ and $\rho$ according to (\ref{prob:R1cnstr}). As (\ref{eq:R1_2}) is satisfied with equality, $P_1^{(2)}$ can be written as a function of $R_1^{(2)}$ and $\rho$. With equality in (\ref{eq:Pmax_2}), we have $P_2^{(2)} = P_{\max} - P_1^{(2)}$. Finally, $R_2^{(2)}$ can also be written in terms of $t$ and $\rho$ according to either (\ref{eq:R2_2}) or (\ref{eq:R2_2sic}) based on the following lemma.}

\begin{lemma} \label{lemma:R2}
\textbf{(Constraint on $R_2^{(2)}$)} If
\begin{align}
|h_2|^2 \le (1-\rho) |h_1|^2, \label{eq:constrR2}
\end{align}
$R_2^{(2)}$ is constrained by (\ref{eq:R2_2}). Otherwise, it is constrained by (\ref{eq:R2_2sic}).
\end{lemma}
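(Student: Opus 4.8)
The plan is to observe that (\ref{eq:R2_2}) and (\ref{eq:R2_2sic}) are both upper bounds on the \emph{same} quantity $R_2^{(2)}$, so the effective constraint is whichever of the two right-hand sides is smaller; the lemma then amounts to deciding which bound is tighter as a function of $\rho$. Since $\log_2(1+x)$ is strictly increasing in $x$, I would reduce the comparison of the two logarithmic bounds to a comparison of the two SINR-like fractions inside them, namely $\frac{|h_2|^2 P_2^{(2)}}{|h_2|^2 P_1^{(2)} + \sigma^2}$ from (\ref{eq:R2_2}) and $\frac{(1-\rho)|h_1|^2 P_2^{(2)}}{(1-\rho)|h_1|^2 P_1^{(2)} + \sigma^2}$ from (\ref{eq:R2_2sic}).

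The key step is to notice that both fractions share the common form
\begin{align}
g(a) = \frac{a P_2^{(2)}}{a P_1^{(2)} + \sigma^2}, \nonumber
\end{align}
evaluated at $a = |h_2|^2$ and at $a = (1-\rho)|h_1|^2$, respectively. I would then establish that $g$ is monotonically increasing in $a$ by differentiating,
\begin{align}
g'(a) = \frac{P_2^{(2)}\,\sigma^2}{\big(a P_1^{(2)} + \sigma^2\big)^2} \ge 0, \nonumber
\end{align}
where the inequality is strict whenever $P_2^{(2)} > 0$. Because $\sigma^2 > 0$, the fraction $g(a)$ increases with $a$ regardless of the value of $P_1^{(2)}$.

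With monotonicity in hand, the condition (\ref{eq:constrR2}), i.e. $|h_2|^2 \le (1-\rho)|h_1|^2$, is exactly equivalent to $g(|h_2|^2) \le g\big((1-\rho)|h_1|^2\big)$, which means the right-hand side of (\ref{eq:R2_2}) does not exceed that of (\ref{eq:R2_2sic}). Hence (\ref{eq:R2_2}) is the binding (tighter) upper bound and governs $R_2^{(2)}$; when the reverse inequality holds, (\ref{eq:R2_2sic}) becomes the binding constraint instead, which is precisely the claim.

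As for the main obstacle, there is little difficulty beyond the monotonicity computation itself, so the argument reduces essentially to a one-line derivative check together with the monotonicity of the logarithm. The only points requiring care are the degenerate cases $P_1^{(2)} = 0$ or $P_2^{(2)} = 0$, which I would dispatch by noting that $\sigma^2 > 0$ keeps $g$ well defined and that $P_2^{(2)} = 0$ trivially collapses both bounds to zero.
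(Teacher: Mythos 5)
Your proof is correct and follows the same route as the paper, which simply states that the lemma follows "by directly comparing the right hand sides" of the two rate constraints; your monotonicity argument for $g(a) = \frac{aP_2^{(2)}}{aP_1^{(2)}+\sigma^2}$ is exactly the detail that comparison requires. Nothing further is needed.
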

\begin{proof}
The lemma is proved by directly comparing the right hand sides of (\ref{eq:R2_2}) and (\ref{eq:R2_2sic}).
\end{proof}

To start with, two special cases are considered: time switching scheme and power splitting scheme, where there is only a single variable $t$ or $\rho$. Then the analysis is extended to the generalized scheme.

\subsection{Time Switching Scheme} \label{sec:ts}
Recall that time switching scheme is a special case of the generalized scheme with $\rho = 0$. In this case, (\ref{eq:constrR2}) always holds and hence, $R_2$ is constrained by (\ref{eq:R2_2}), and the power constraint (\ref{eq:Psic}) degrades to
\begin{align}
t\xi |h_1|^2  P_{\mathrm{max}} \ge (1-t)P_{\mathrm{SIC}}. \label{eq:Psicts}
\end{align}
As the constraints (\ref{eq:Pmax_1})-(\ref{eq:R2_2}) are all satisfied with equality, the problem ($\mathbf{P0}$) can be reformulated with a single variable $t$ as
\begin{subequations}\label{prob:tsequal}
\begin{align}
(\mathbf{P1}) \;\: \max_t &\; f_0(t) \\
\mathrm{s.t.} &\; \frac{P_{\mathrm{SIC}}}{\xi |h_1|^2 P_{\mathrm{max}} \!+\! P_{\mathrm{SIC}}} \!\le\! t \!\le\! 1\!-\!\frac{r} {\log_2\big( 1\!+ \!\frac{|h_1|^2P_{\mathrm{max}}}{\sigma^2}\big)}, \label{prob:tstf}
\end{align}
\end{subequations}
where the objective function is
\begin{align}
f_0(t) = &\: \log_2 \bigg( 1+ \frac{|h_2|^2P_{\mathrm{max}}}{\sigma^2}\bigg) - \nonumber\\
&\: (1-t) \log_2 \bigg( \frac{|h_2|^2}{|h_1|^2} \big(2^{\frac{r}{1-t}}-1\big) + 1\bigg),
\end{align}
the left hand side of (\ref{prob:tstf}) is derived from (\ref{eq:Psicts}), and the right hand side of (\ref{prob:tstf}) holds as $r = \le (1-t)\log_2 \big(1+ \frac{|h_1|^2P_{\mathrm{max}}}{\sigma^2} \big)$. We have the following lemma.
\begin{lemma} \label{lemma:mono}
\textbf{(Monotonicity)} The function $f_0(t)$ is non-increasing for $0 \le t < 1$.
\end{lemma}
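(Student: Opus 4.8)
The plan is to exploit the fact that the first term of $f_0(t)$ is constant in $t$, so that proving monotonicity reduces to showing that the subtracted term
\[
g(t) = (1-t)\log_2\!\Big(\tfrac{|h_2|^2}{|h_1|^2}\big(2^{r/(1-t)}-1\big)+1\Big)
\]
is \emph{non-decreasing} on $[0,1)$. First I would introduce the shorthand $a = |h_2|^2/|h_1|^2$, which satisfies $0 < a < 1$ by the standing assumption $|h_1|^2 > |h_2|^2$, and rewrite the inner bracket as $a\,2^{r/(1-t)} + (1-a)$.

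The key move is the change of variable $s = r/(1-t)$, which is strictly increasing in $t$ on $[0,1)$ and ranges over $[r,\infty)$. Under this substitution $g$ becomes $r\,H(s)$ with $H(s) = \tfrac{1}{s}\log_2\!\big(a\,2^{s}+1-a\big)$, so that $g$ is non-decreasing in $t$ if and only if $H$ is non-decreasing in $s$. Writing $\psi(s) = \log_2(a\,2^s + 1 - a)$, the sign of $H'(s)$ is governed by the auxiliary function $\Phi(s) = s\,\psi'(s) - \psi(s)$, since $H' = \Phi/s^2$, and it therefore suffices to show $\Phi(s) \ge 0$ for $s \ge 0$.

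To establish $\Phi \ge 0$ I would use a boundary-plus-monotonicity argument: $\Phi(0) = -\psi(0) = -\log_2 1 = 0$, while differentiating telescopes to $\Phi'(s) = s\,\psi''(s)$. Hence the whole claim rests on the convexity of $\psi$. This is the one step that needs a genuine computation, and it is where the hypothesis $0 < a < 1$ is essential: a direct evaluation of the second derivative shows that its sign is that of $w''w - (w')^2$ with $w(s) = a\,2^s + 1 - a$, and this quantity simplifies to $(\ln 2)^2\,a(1-a)\,2^s$, which is strictly positive precisely because $a \in (0,1)$. Therefore $\psi'' > 0$, so $\Phi'(s) \ge 0$ on $[0,\infty)$; combined with $\Phi(0)=0$ this yields $\Phi(s) \ge 0$, hence $H' \ge 0$, hence $g$ is non-decreasing and $f_0 = \mathrm{const} - g$ is non-increasing.

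The main obstacle I anticipate is not the convexity computation itself but recognizing that a direct differentiation of $f_0(t)$ produces an expression whose sign is not transparent; the substitution $s = r/(1-t)$ together with the auxiliary function $\Phi$ is what makes the sign manifest, by converting the problem into the convexity of the function $\psi$. A cleaner alternative worth noting is that $\psi$ is convex \emph{a priori} as the base-$2$ logarithm of a positive combination of exponentials $a\,2^s + (1-a)\,2^0$, which is a standard log-sum-exp convexity fact; invoking this would let me skip the explicit second-derivative calculation entirely, at the cost of citing the general result.
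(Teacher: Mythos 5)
Your proof is correct, but it takes a genuinely different route from the paper's. The paper writes $f_0(t) = \alpha - (1-t)\log_2(\beta 2^{r/(1-t)}+1-\beta)$, computes the endpoint derivative $f_0'(0)$ explicitly, shows it is nonpositive by proving that the auxiliary function $g(\beta) = (\beta 2^r+1-\beta)\log_2(\beta 2^r+1-\beta)-\beta r 2^r$ is convex \emph{in the channel ratio} $\beta$ with $g(0)=g(1)=0$, and then separately verifies $f_0''(t)\le 0$ so that $f_0'(t)\le f_0'(0)\le 0$ throughout. You instead substitute $s = r/(1-t)$ and reduce the claim to the monotonicity of $H(s)=\psi(s)/s$ with $\psi(s)=\log_2(a2^s+1-a)$, which follows from $\psi(0)=0$ together with convexity of $\psi$ \emph{in the transformed time variable} $s$ (the increasing-chord-slope property of a convex function through the origin, which you re-derive via $\Phi(s)=s\psi'(s)-\psi(s)$). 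Both arguments ultimately rest on a one-line second-derivative computation, but yours isolates the structural reason more cleanly, needs no endpoint evaluation, and the convexity of $\psi$ can even be quoted as the standard log-sum-exp fact, as you note; the paper's version has the side benefit of explicitly establishing concavity of $f_0$, which is reused in spirit elsewhere (the perspective-function argument in the proof of its Lemma 5 is essentially the same transformation you perform). The only caveat is that your substitution presupposes $r>0$ (so that $s$ is well defined and strictly increasing and the factor $r$ in $g=rH(s)$ is positive); the case $r=0$ makes $f_0$ constant and is excluded by the paper's standing assumption, so this is cosmetic rather than a gap.
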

\begin{proof}
See Appendix \ref{proof:mono}.
\end{proof}

Based on Lemma \ref{lemma:mono}, the optimal solution of problem ($\mathbf{P1}$) can be directly obtained by taking the minimum value of $t$. The result is summarized in the following theorem.
\begin{theorem} \label{thm:ts}
The optimal solution for the problem ($\mathbf{P1}$) is
\begin{align}
{}& R_{2,\mathrm{TS}}^* = \log_2 \bigg( 1+ \frac{|h_2|^2P_{\mathrm{max}}}{\sigma^2}\bigg) - \nonumber\\
{}& \frac{\xi |h_1|^2 P_{\mathrm{max}}}{\xi |h_1|^2 P_{\mathrm{max}} \!+\! P_{\mathrm{SIC}}} \log_2 \bigg( \frac{|h_2|^2}{|h_1|^2} \Big(2^{\frac{r(\xi |h_1|^2 P_{\mathrm{max}} \!+\! P_{\mathrm{SIC}})}{\xi |h_1|^2 P_{\mathrm{max}}}}\!-\!1\Big) \!+\! 1 \bigg), \label{eq:R2opt}
\end{align}
which is achieved when
\begin{align}
t = \frac{P_{\mathrm{SIC}}}{\xi |h_1|^2 P_{\mathrm{max}} + P_{\mathrm{SIC}}}. \label{eq:tsoptcon}
\end{align}
The feasible range of $r$ is
\begin{align}
0 \le r \le \frac{\xi |h_1|^2 P_{\mathrm{max}}}{\xi |h_1|^2 P_{\mathrm{max}} + P_{\mathrm{SIC}}} \log_2 \bigg( 1+ \frac{|h_1|^2P_{\mathrm{max}}}{\sigma^2}\bigg). \label{eq:R1feasi}
\end{align}
\end{theorem}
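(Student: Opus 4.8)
The plan is to let Lemma~\ref{lemma:mono} do the heavy lifting and reduce the entire theorem to a substitution plus a feasibility check. Since Lemma~\ref{lemma:mono} asserts that $f_0(t)$ is non-increasing on $[0,1)$, the maximum of ($\mathbf{P1}$) over the admissible interval (\ref{prob:tstf}) is attained at the smallest feasible $t$. That endpoint is exactly $\frac{P_{\mathrm{SIC}}}{\xi |h_1|^2 P_{\mathrm{max}} + P_{\mathrm{SIC}}}$, which is the left boundary in (\ref{prob:tstf}) arising from the energy-balance constraint (\ref{eq:Psicts}). This immediately establishes the optimal scheduling fraction (\ref{eq:tsoptcon}) with no further optimization.

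With the optimizer fixed, I would substitute $t = \frac{P_{\mathrm{SIC}}}{\xi |h_1|^2 P_{\mathrm{max}} + P_{\mathrm{SIC}}}$ into $f_0(t)$. The useful identities are $1 - t = \frac{\xi |h_1|^2 P_{\mathrm{max}}}{\xi |h_1|^2 P_{\mathrm{max}} + P_{\mathrm{SIC}}}$ and $\frac{1}{1-t} = \frac{\xi |h_1|^2 P_{\mathrm{max}} + P_{\mathrm{SIC}}}{\xi |h_1|^2 P_{\mathrm{max}}}$, the latter converting the exponent $\frac{r}{1-t}$ into the form appearing in (\ref{eq:R2opt}). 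Propagating these through the two terms of $f_0$ reproduces (\ref{eq:R2opt}); this is routine algebra I would not dwell on.

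For the feasible range of $r$, I would require the admissible interval in (\ref{prob:tstf}) to be nonempty, namely that its left endpoint not exceed its right endpoint:
\begin{align}
\frac{P_{\mathrm{SIC}}}{\xi |h_1|^2 P_{\mathrm{max}} + P_{\mathrm{SIC}}} \le 1 - \frac{r}{\log_2\big(1 + \frac{|h_1|^2 P_{\mathrm{max}}}{\sigma^2}\big)}. \nonumber
\end{align}
Rearranging leaves $\frac{r}{\log_2\big(1 + \frac{|h_1|^2 P_{\mathrm{max}}}{\sigma^2}\big)} \le \frac{\xi |h_1|^2 P_{\mathrm{max}}}{\xi |h_1|^2 P_{\mathrm{max}} + P_{\mathrm{SIC}}}$, and clearing the logarithm yields the upper bound in (\ref{eq:R1feasi}); the lower bound $r \ge 0$ is the trivial regime already set aside in the problem statement.

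Because the monotonicity lemma removes all the analytic difficulty, there is no real obstacle beyond bookkeeping. The one point I would verify carefully is internal consistency: the minimizing $t$ from the energy constraint must not exceed the upper endpoint of (\ref{prob:tstf}), since otherwise the interval is empty and the claimed optimum is unattained. This is precisely the inequality that defines the feasibility range, so the optimality argument and the range (\ref{eq:R1feasi}) are two readings of the same condition, and confirming that they agree is the only step warranting a second look.
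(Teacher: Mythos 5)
Your proposal matches the paper's argument exactly: the paper also invokes Lemma~\ref{lemma:mono} to conclude that the optimum of ($\mathbf{P1}$) is attained at the left endpoint of (\ref{prob:tstf}), substitutes that value of $t$ into $f_0$ to obtain (\ref{eq:R2opt}), and derives (\ref{eq:R1feasi}) by requiring the interval in (\ref{prob:tstf}) to be nonempty. The algebraic identities you cite and the feasibility check are the same bookkeeping the paper performs, so the proof is correct and essentially identical in approach.
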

In Theorem \ref{thm:ts}, the feasible range of $r$ is obtained by guaranteeing that the feasible range of $t$ defined by (\ref{prob:tstf}) is nonempty. Notice that when (\ref{eq:tsoptcon}) holds, (\ref{eq:Psicts}) is satisfied with equality. That is, Theorem \ref{thm:ts} holds as the amount of harvested energy equals to the required energy for information decoding, which is intuitively the most \emph{economical} way of dividing the received signal between energy harvesting and information decoding.

Based on Theorem \ref{thm:ts}, we have the following corollary.
\begin{corollary} \label{cor:ps}
For all $r$ satisfying (\ref{eq:R1feasi}), we have
\begin{align}
\frac{P_{\mathrm{SIC}}}{\xi |h_1|^2 P_{\mathrm{max}} + P_{\mathrm{SIC}}}R_{2,\mathrm{max}} \le R_{2,\mathrm{TS}}^* \le R_{2,\mathrm{max}},
\end{align}
where $R_{2,\mathrm{max}} = \log_2 \big( 1+ \frac{|h_2|^2P_{\mathrm{max}}}{\sigma^2}\big)$.
\end{corollary}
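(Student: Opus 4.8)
The plan is to work directly from the closed-form expression (\ref{eq:R2opt}) for $R_{2,\mathrm{TS}}^*$, since Theorem~\ref{thm:ts} already delivers the optimal value and no further optimization is needed. First I would introduce the shorthand $\alpha = \frac{\xi |h_1|^2 P_{\mathrm{max}}}{\xi |h_1|^2 P_{\mathrm{max}} + P_{\mathrm{SIC}}}$, which satisfies $0 < \alpha < 1$, and observe that the lower-bound coefficient in the statement is exactly $1 - \alpha = \frac{P_{\mathrm{SIC}}}{\xi |h_1|^2 P_{\mathrm{max}} + P_{\mathrm{SIC}}}$. With this notation (\ref{eq:R2opt}) reads $R_{2,\mathrm{TS}}^* = R_{2,\mathrm{max}} - \alpha\, g(r)$, where $g(r) = \log_2\!\big(\frac{|h_2|^2}{|h_1|^2}(2^{r/\alpha}-1)+1\big)$. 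Both inequalities then reduce to controlling the single quantity $g(r)$.

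For the upper bound, I would note that $r \ge 0$ forces $2^{r/\alpha} \ge 1$, so the argument of the logarithm in $g(r)$ is at least $1$ and hence $g(r) \ge 0$. Since $\alpha > 0$, this immediately gives $R_{2,\mathrm{TS}}^* = R_{2,\mathrm{max}} - \alpha\, g(r) \le R_{2,\mathrm{max}}$, with equality exactly at $r = 0$.

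For the lower bound, I would rearrange the target inequality $R_{2,\mathrm{TS}}^* \ge (1-\alpha) R_{2,\mathrm{max}}$ into $\alpha R_{2,\mathrm{max}} \ge \alpha\, g(r)$, i.e. (after dividing by $\alpha>0$) into $g(r) \le R_{2,\mathrm{max}}$. Using the monotonicity of $\log_2$ and cancelling the common positive factor $|h_2|^2$, this is equivalent to $2^{r/\alpha} \le 1 + \frac{|h_1|^2 P_{\mathrm{max}}}{\sigma^2}$, that is $r \le \alpha \log_2\!\big(1 + \frac{|h_1|^2 P_{\mathrm{max}}}{\sigma^2}\big)$. The key observation is that this last inequality is \emph{precisely} the feasibility constraint (\ref{eq:R1feasi}) on $r$; hence the lower bound holds for every admissible $r$, with equality attained exactly when $r$ reaches its maximum feasible value.

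The argument is essentially a chain of reversible algebraic steps, so I do not anticipate a genuine obstacle. The only points requiring care are to check that each manipulation (dividing by $\alpha > 0$ and by $|h_2|^2 > 0$, and peeling off the increasing $\log_2$) is order-preserving, and to recognize that the lower bound is saturated precisely by the feasibility boundary (\ref{eq:R1feasi}) rather than requiring a separate estimate.
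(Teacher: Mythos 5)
Your proposal is correct and follows essentially the same route the paper implies: the paper gives no explicit proof, but its surrounding discussion (maximum at $r=0$, minimum at the right endpoint of (\ref{eq:R1feasi})) amounts to exactly your observation that $R_{2,\mathrm{TS}}^* = R_{2,\mathrm{max}} - \alpha g(r)$ with $g$ increasing, $g(0)=0$, and $g(r)=R_{2,\mathrm{max}}$ at the feasibility boundary. All of your order-preserving manipulations (dividing by $\alpha>0$ and $|h_2|^2>0$, peeling off $\log_2$) are valid under the paper's standing assumption $0<|h_2|^2<|h_1|^2$.
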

Corollary \ref{cor:ps} tells us that when $R_1 = 0$, $R_{2,\mathrm{TS}}^*$ achieves its maximum, i.e., all the power resource is allocated to UE 2. However, when $R_1$ equals to the right hand side of (\ref{eq:R1feasi}), $R_{2,\mathrm{TS}}^*$ achieves its minimum, which is strictly larger than 0. It means that there is a \emph{cut-off} line on achievable rate region for time switching scheme. When $R_1$ achieves the maximal value, $R_1$ cannot be further increased by sacrificing $R_2$. This is due to the nature of time switching: to harvest energy, the length of the first sub-slot can never be zero.

\subsection{Power Splitting Scheme}
In power splitting scheme, we have $t = 0$. In this case, the power constraint (\ref{eq:Psic}) degrades to
\begin{align}
\rho \xi |h_1|^2 P_{\mathrm{max}} \ge P_{\mathrm{SIC}}. \label{eq:Psicps}
\end{align}
Based on the above equation, there is a feasibility issue for power splitting scheme, as detailed in the following lemma.
\begin{lemma} \label{lemma:feasible}
\textbf{(Feasibility)} If
\begin{align}
|h_1|^2 \le \frac{P_{\mathrm{SIC}}}{\xi P_{\mathrm{max}}}, \label{eq:deepfade}
\end{align}
there does not exist any achievable rate pair $(R_1, R_2)$ in power splitting mode so that $R_1 > 0$ and $R_2 > 0$.
\end{lemma}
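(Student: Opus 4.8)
The plan is to argue by contradiction. Suppose that, under the channel condition (\ref{eq:deepfade}), there existed an achievable rate pair $(R_1, R_2)$ in power splitting mode (i.e.\ with $t = 0$) satisfying $R_1 > 0$. I would then show that the positivity of $R_1$ and the energy-harvesting constraint (\ref{eq:Psic}) impose mutually exclusive requirements on the splitting ratio $\rho$. Note that I only need $R_1 > 0$, so the resulting statement is in fact stronger than the lemma, which asks for $R_1 > 0$ and $R_2 > 0$ simultaneously.

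First I would translate $R_1 > 0$ into a condition on $\rho$. Since $t = 0$ we have $R_1 = R_1^{(2)}$, so $R_1 > 0$ forces $R_1^{(2)} > 0$. By (\ref{eq:R1_2}) this requires $|h_1|^2 (1-\rho) P_1^{(2)} > 0$, and as $|h_1|^2 > 0$ it yields in particular $\rho < 1$ (together with $P_1^{(2)} > 0$).

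Next I would extract a lower bound on $\rho$ from the energy constraint. Setting $t = 0$ in (\ref{eq:Psic}) gives $\rho \xi |h_1|^2 \big(P_1^{(2)} + P_2^{(2)}\big) \ge P_{\mathrm{SIC}}$. Using the peak-power budget (\ref{eq:Pmax_2}), $P_1^{(2)} + P_2^{(2)} \le P_{\mathrm{max}}$, so the chain $P_{\mathrm{SIC}} \le \rho \xi |h_1|^2 \big(P_1^{(2)} + P_2^{(2)}\big) \le \rho \xi |h_1|^2 P_{\mathrm{max}}$ shows that any feasible $\rho$ must satisfy $\rho \xi |h_1|^2 P_{\mathrm{max}} \ge P_{\mathrm{SIC}}$, which is precisely (\ref{eq:Psicps}), i.e.\ $\rho \ge \frac{P_{\mathrm{SIC}}}{\xi |h_1|^2 P_{\mathrm{max}}}$. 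Replacing $P_1^{(2)} + P_2^{(2)}$ by its largest admissible value $P_{\mathrm{max}}$ is legitimate here because it produces the \emph{loosest} requirement on $\rho$ consistent with the power budget; if even this necessary condition is incompatible with $\rho < 1$, then no admissible operating point can exist.

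Finally I would substitute the channel hypothesis. Under (\ref{eq:deepfade}), $|h_1|^2 \le \frac{P_{\mathrm{SIC}}}{\xi P_{\mathrm{max}}}$ implies $\frac{P_{\mathrm{SIC}}}{\xi |h_1|^2 P_{\mathrm{max}}} \ge 1$, so the bound of the previous step forces $\rho \ge 1$, contradicting $\rho < 1$ obtained from $R_1 > 0$. Hence no achievable pair with $R_1 > 0$ (and a fortiori none with $R_1 > 0$ and $R_2 > 0$) exists, proving the lemma. The argument is short, and I expect no serious obstacle; the only point deserving care is the direction of the inequality in the third step, namely that bounding the total transmit power by $P_{\mathrm{max}}$ gives a genuine \emph{necessary} condition on $\rho$ rather than inadvertently discarding feasible allocations.
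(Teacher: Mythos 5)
Your proof is correct and follows essentially the same route as the paper: the energy constraint (\ref{eq:Psic}) with $t=0$ reduces to (\ref{eq:Psicps}), which under (\ref{eq:deepfade}) forces $\rho \ge \frac{P_{\mathrm{SIC}}}{\xi |h_1|^2 P_{\mathrm{max}}} \ge 1$, incompatible with the $\rho < 1$ needed for $R_1 > 0$. You merely spell out (correctly) the step the paper leaves implicit, namely that bounding $P_1^{(2)}+P_2^{(2)}$ by $P_{\mathrm{max}}$ yields a genuine necessary condition on $\rho$.
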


Lemma \ref{lemma:feasible} is directly obtained from (\ref{eq:Psicps}) as if (\ref{eq:deepfade}) holds, (\ref{eq:Psicps}) cannot be satisfied with any feasible value $0 \le \rho < 1$. According to the above lemma, power splitting scheme is not applicable if the channel of near user is in deep fading.

If $|h_1|^2 > \frac{P_{\mathrm{SIC}}}{\xi P_{\mathrm{max}}}$, the problem ($\mathbf{P0}$) can be reformulated as
\begin{subequations}\label{prob:ps}
\begin{align}
(\mathbf{P2}) \quad \max_{\rho} \;&\; \min\bigg\{\log_2 \bigg(\frac{|h_2|^2P_{\mathrm{max}} + \sigma^2}{|h_2|^2P_1^{(2)} + \sigma^2}\bigg), \nonumber\\
\;&\; \qquad \log_2 \bigg(\frac{|h_1|^2 (1-\rho) P_{\mathrm{max}} + \sigma^2}{|h_1|^2 (1-\rho) P_1^{(2)}+\sigma^2}\bigg) \bigg\} \label{prob:psobj} \\
\mathrm{s.t.} \;&\; \log_2 \bigg(1+\frac{|h_1|^2(1-\rho)P_1^{(2)}}{\sigma^2} \bigg) = r, \label{prob:psP1} \\
\;&\; \frac{P_{\mathrm{SIC}}}{\xi |h_1|^2 P_{\mathrm{max}}} \le \rho < 1,
\end{align}
\end{subequations}
where $P_1^{(2)}$ in the objective function can actually be represented in terms of $\rho$ based on (\ref{prob:psP1}), and the solution is summarized in the following theorem.
\begin{theorem} \label{thm:ps}
If $|h_1|^2 > \frac{P_{\mathrm{SIC}}}{\xi P_{\mathrm{max}}}$, the optimal solution for the problem ($\mathbf{P2}$) is
\begin{align}
R_{2,\mathrm{PS}}^* = \left\{ \begin{array}{l} \log_2 \Big( \frac{(|h_2|^2 P_{\mathrm{max}} + \sigma^2) (\xi |h_1|^2 P_{\mathrm{max}} - P_{\mathrm{SIC}})}{(\xi P_{\mathrm{max}} (|h_2|^2(2^r-1) + |h_1|^2) - P_{\mathrm{SIC}}) \sigma^2} \Big), \\ \qquad\qquad\quad\quad \textrm{if~} |h_2|^2 \le |h_1|^2 - \frac{P_{\mathrm{SIC}}}{\xi P_{\mathrm{max}}}\\
\log_2 \Big( 1+ \frac{\xi|h_1|^2P_{\mathrm{max}} - P_{\mathrm{SIC}}}{\xi\sigma^2} \Big) - r, \quad \textrm{else}
\end{array}\right. \label{eq:optR2}
\end{align}
which is achieved when
\begin{align}
\rho = \frac{P_{\mathrm{SIC}}}{\xi |h_1|^2 P_{\mathrm{max}}}.
\end{align}
The feasible range of $r$ is
\begin{align}
0 \le r \le \log_2 \bigg( 1+ \frac{\xi|h_1|^2P_{\mathrm{max}} - P_{\mathrm{SIC}}}{\xi\sigma^2} \bigg). \label{eq:feasir}
\end{align}
\end{theorem}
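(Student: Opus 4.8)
The plan is to reduce problem ($\mathbf{P2}$) to a single-variable optimization over $\rho$ whose objective turns out to be monotone, so that the optimizer sits at an endpoint of the admissible interval. First I would use the rate constraint (\ref{prob:psP1}) to eliminate $P_1^{(2)}$, writing $P_1^{(2)} = \frac{(2^r - 1)\sigma^2}{|h_1|^2(1-\rho)}$. A convenient consequence is that the denominator of the second term of the objective (\ref{prob:psobj}) collapses, since $|h_1|^2(1-\rho)P_1^{(2)} + \sigma^2 = (2^r-1)\sigma^2 + \sigma^2 = 2^r\sigma^2$; that term therefore simplifies to $\log_2\big(\frac{|h_1|^2(1-\rho)P_{\mathrm{max}} + \sigma^2}{2^r \sigma^2}\big)$.

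Next I would establish the key monotonicity. As $\rho$ increases, $P_1^{(2)}$ increases because $1-\rho$ shrinks, so the first term $\log_2\big(\frac{|h_2|^2 P_{\mathrm{max}} + \sigma^2}{|h_2|^2 P_1^{(2)} + \sigma^2}\big)$ is decreasing in $\rho$; the simplified second term is visibly decreasing as well, since its numerator shrinks while the denominator stays constant. The pointwise minimum of two decreasing functions is decreasing, so the whole objective is non-increasing in $\rho$. Hence the optimum is attained at the smallest feasible value, $\rho^\ast = \frac{P_{\mathrm{SIC}}}{\xi|h_1|^2 P_{\mathrm{max}}}$, i.e.\ the \emph{economical} operating point at which the power constraint (\ref{eq:Psicps}) holds with equality.

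It then remains to evaluate the objective at $\rho^\ast$ and to decide which term is active. At $\rho^\ast$ one has $(1-\rho^\ast)|h_1|^2 = |h_1|^2 - \frac{P_{\mathrm{SIC}}}{\xi P_{\mathrm{max}}}$, so invoking Lemma \ref{lemma:R2} (equivalently, directly comparing the two logarithms) shows that the first term is the smaller one precisely when $|h_2|^2 \le |h_1|^2 - \frac{P_{\mathrm{SIC}}}{\xi P_{\mathrm{max}}}$. Substituting $\rho^\ast$ into each term and simplifying then reproduces the two branches of (\ref{eq:optR2}). Finally, the feasible range (\ref{eq:feasir}) follows from requiring $P_1^{(2)} \le P_{\mathrm{max}}$ at $\rho^\ast$ (so that $P_2^{(2)} \ge 0$ in (\ref{eq:Pmax_2})); because $P_1^{(2)}$ is smallest at $\rho^\ast$ over the admissible interval, this is exactly the binding feasibility condition on $r$.

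I expect the only real labor to be the bookkeeping in the two algebraic substitutions that yield the closed forms; the conceptual core—that both terms of the $\min$ decrease in $\rho$, pinning the optimum to the lower endpoint—is immediate. The one point requiring care is the case split: I must verify that the branch selected by Lemma \ref{lemma:R2} at $\rho^\ast$ indeed matches the stated condition, which it does because the comparison of the two logarithmic terms reduces to comparing $|h_2|^2$ with $(1-\rho^\ast)|h_1|^2 = |h_1|^2 - \frac{P_{\mathrm{SIC}}}{\xi P_{\mathrm{max}}}$.
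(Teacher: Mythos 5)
Your proposal is correct and follows essentially the same route as the paper's proof: both arguments use the rate constraint to tie $P_1^{(2)}$ to $\rho$, observe that the objective is non-increasing in $\rho$ so the minimum feasible $\rho = \frac{P_{\mathrm{SIC}}}{\xi |h_1|^2 P_{\mathrm{max}}}$ is optimal, and then compare the two terms of the $\min$ (your invocation of Lemma~\ref{lemma:R2} at $\rho^\ast$ is exactly that comparison). The simplification of the second term's denominator to $2^r\sigma^2$ is a nice explicit touch but does not change the argument.
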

\begin{proof}
As the objective function in (\ref{prob:psobj}) is a non-increasing function of $P_1^{(2)}$ and $\rho$, and $P_1^{(2)}$ is an increasing function of $\rho$, the maximum rate is achieved when $\rho = \frac{P_{\mathrm{SIC}}}{\xi |h_1|^2 P_{\mathrm{max}}}$ and $P_1^{(2)} = \frac{\xi P_{\mathrm{max}} (2^r-1)\sigma^2}{\xi |h_1|^2 P_{\mathrm{max}} - P_{\mathrm{SIC}}}$. Then the closed-form rate $R_{2,\mathrm{PS}}^*$ is obtained by comparing the two items in the minimization in (\ref{prob:psobj}).

The feasible range of $r$ is obtained according to (\ref{prob:psP1}).
\end{proof}

Notice that the optimal is achieved when the harvested energy is equal to the required decoding energy, which is the same with the time switching case. However, the difference is that the power splitting can tradeoff $R_1$ and $R_2$ completely. That is, when $R_1$ gradually decreases to zero, $R_2$ gradually increases to its maximum, and vice versa. In particular, when $|h_2|^2 > |h_1|^2 - \frac{P_{\mathrm{SIC}}}{\xi P_{\mathrm{max}}}$, $R_1$ trades off with $R_2$ linearly. However, power splitting scheme has stringent requirement on the channel gain of UE 1 (see Lemma \ref{lemma:feasible}). As a result, it is not applicable when the channel of UE 1 is not good enough.

\subsection{Generalized Scheme}
For the generalized scheme, based on Lemma \ref{lemma:R2}, the original problem ($\mathbf{P0}$) is divided into two subproblems, as detailed in the following subsections, respectively.

\subsubsection{Subproblem with $|h_2|^2 \le (1-\rho) |h_1|^2$} \label{sec:sub1}
When $|h_2|^2 \le (1-\rho) |h_1|^2$, $R_2^{(2)}$ is equal to the right hand side of (\ref{eq:R2_2}). Thus, the problem can be reformulated as
\begin{subequations}\label{prob:mix1}
\begin{align}
(\mathbf{P3.1}) \quad \max_{t, \rho} \;&\; \log_2 \bigg( 1+ \frac{|h_2|^2 P_{\mathrm{max}}}{\sigma^2}\bigg) - \nonumber\\
&\;(1-t) \log_2 \bigg( \frac{|h_2|^2}{(1-\rho) |h_1|^2} \big(2^{\frac{r}{1-t}}-1\big) + 1\bigg) \label{prob:mix1obj}  \\
\mathrm{s.t.} \;&\; 1-\rho \ge \frac{|h_2|^2}{|h_1|^2}, \label{prob:mix1rho}\\
{} \;&\; (1-t) \log_2 \bigg( 1 + \frac{|h_1|^2 (1-\rho) P_{\mathrm{max}}}{\sigma^2}\bigg) \ge r, \label{prob:mix1r}\\ 
{} \;&\; \frac{t}{1-t} + \rho \ge \frac{P_{\mathrm{SIC}}}{\xi |h_1|^2 P_{\mathrm{max}}}, \label{prob:mix1Psic}\\
{} \;&\; 0 \le t < 1, 0 \le \rho < 1, \label{prob:mix1trho}
\end{align}
\end{subequations}
where (\ref{prob:mix1r}) comes from the fact that $P_1^{(2)} \le P_{\mathrm{max}}$, and (\ref{prob:mix1Psic}) is equivalent to (\ref{eq:Psic}). As the objective function in (\ref{prob:mix1obj}) is a decreasing function of $\rho$, it is maximized when $\rho$ achieves its minimum, i.e., $\rho = \max \big\{0, \frac{P_{\mathrm{SIC}}}{\xi |h_1|^2 P_{\mathrm{max}}} - \frac{t}{1-t} \big\}$. Based on the optimal value of $\rho$, there are two cases as follows.

\textbf{Case (1.1):} If $0 \le \frac{P_{\mathrm{SIC}}}{\xi |h_1|^2 P_{\mathrm{max}}} - \frac{t}{1-t}$, i.e., $t \le t_U$, where
\begin{align}
t_U = \frac{P_{\mathrm{SIC}}}{\xi |h_1|^2 P_{\mathrm{max}} + P_{\mathrm{SIC}}},
\end{align}
we have $\rho = \frac{P_{\mathrm{SIC}}}{\xi |h_1|^2 P_{\mathrm{max}}} - \frac{t}{1-t}$. In this case, (\ref{prob:mix1Psic}) is satisfied with equality, and the problem can be transformed into
\begin{subequations}\label{prob:mix1t}
\begin{align}
(\mathbf{P3.1c}) \quad \max_{t} \;&\; f_1(t) \label{prob:mix1tobj}  \\
\mathrm{s.t.} \;&\; f_2(t) \ge r, \label{prob:mix1tr}\\
{} \;&\; \max \{0, t_L\} \le t \le t_U, \label{prob:mix1ttrho}
\end{align}
\end{subequations}
with a single variable $t$, where the objective function is
\begin{align}
f_1(t) =&\: \log_2 \bigg( 1+ \frac{|h_2|^2 P_{\mathrm{max}}}{\sigma^2}\bigg) - \nonumber\\
&\: (1\!-\!t) \log_2 \bigg( \frac{|h_2|^2}{|h_1|^2(\frac{1}{1\!-\!t} \!-\! \frac{P_{\mathrm{SIC}}}{\xi |h_1|^2 P_{\mathrm{max}}})} \big(2^{\frac{r}{1\!-\!t}}\!-\!1\big) \!+\! 1\bigg),
\end{align}
the constraint function in (\ref{prob:mix1tr}) is
\begin{align}
f_2(t) = (1\!-\!t) \log_2 \bigg( 1\!+\! \frac{|h_1|^2 P_{\mathrm{max}}}{\sigma^2}\Big( \frac{1}{1-t} \!-\! \frac{P_{\mathrm{SIC}}}{\xi |h_1|^2 P_{\mathrm{max}}}\Big) \bigg), \label{eq:f2}
\end{align}
and
\begin{align}
t_L = 1- \frac{\xi |h_1|^2 P_{\mathrm{max}}}{P_{\mathrm{SIC}} + \xi |h_2|^2 P_{\mathrm{max}}}.
\end{align}
The constraint (\ref{prob:mix1ttrho}) is obtained from $\frac{|h_2|^2}{|h_1|^2} \le 1-\rho \le 1$ and $0 \le t < 1$. It can be proved that $f_1(t)$ and $f_2(t)$ are both concave functions.
\begin{lemma} \label{lemma:concave1}
\textbf{(Concavity of $f_1(t)$)} If ${P_{\mathrm{SIC}}} < {\xi |h_1|^2 P_{\mathrm{max}}}$, the function $f_1(t)$ is concave for all $0 \le t < t_U$. If ${P_{\mathrm{SIC}}} \ge {\xi |h_1|^2 P_{\mathrm{max}}}$, $f_1(t)$ is concave for $1-\frac{\xi |h_1|^2 P_{\mathrm{max}}}{P_{\mathrm{SIC}}} < t < t_U$.
\end{lemma}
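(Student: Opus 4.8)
The plan is to exploit the fact that the first term of $f_1(t)$ is a constant, so proving concavity of $f_1$ is equivalent to proving \emph{convexity} of the subtracted term
\[
g(t) = (1-t)\log_2\!\left(1 + \frac{|h_2|^2\big(2^{r/(1-t)}-1\big)}{\frac{|h_1|^2}{1-t} - c}\right), \qquad c := \frac{P_{\mathrm{SIC}}}{\xi P_{\mathrm{max}}}.
\]
First I would substitute $u = 1-t$ (an affine change of variable that preserves convexity) and observe that $g$ has the \emph{perspective} form $g = u\,\psi(1/u)$, where
\[
\psi(z) = \log_2\!\left(1 + \frac{|h_2|^2(2^{rz}-1)}{|h_1|^2 z - c}\right).
\]
By the standard perspective rule, $u\,\psi(1/u)$ is the restriction to the line $x=1$ of the jointly convex map $(x,u)\mapsto u\,\psi(x/u)$ whenever $\psi$ is convex on the relevant interval; hence it suffices to show that $\psi$ is convex on the range of $z = 1/(1-t)$.

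To prove $\psi$ convex, I would write $\psi = \log_2(1+q)$ with $q(z) = \frac{|h_2|^2(2^{rz}-1)}{|h_1|^2 z - c}$ and use $(\ln(1+q))'' = \frac{q''(1+q)-(q')^2}{(1+q)^2}$. The key reduction is that it is enough to prove $q$ is \emph{log-convex}: if $(\ln q)''\ge 0$ then both $q''\ge 0$ and $q''q\ge (q')^2$, so that $q''(1+q)-(q')^2 = q'' + \big(q''q-(q')^2\big)\ge 0$ and therefore $\psi''\ge 0$. Computing the second derivative of $\ln q$ gives
\[
(\ln q)'' = \frac{|h_1|^4}{(|h_1|^2 z - c)^2} - \frac{(r\ln 2)^2\,2^{rz}}{(2^{rz}-1)^2},
\]
and I would bound the first term using $0 < |h_1|^2 z - c \le |h_1|^2 z$ (valid because $c\ge 0$ on the domain where the logarithms are defined), which reduces the required inequality to $(2^{rz}-1)^2 \ge (rz\ln 2)^2\,2^{rz}$. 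Setting $m = rz\ln 2$ this is exactly $(e^m-1)^2 \ge m^2 e^m$, equivalently $2\sinh(m/2)\ge m$, which is immediate from the Taylor series of $\sinh$.

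Finally, the two cases in the statement enter only through the domain of $\psi$: the argument of each logarithm is positive iff $|h_1|^2 z - c > 0$, i.e.\ $\frac{|h_1|^2}{1-t} > c$. When $P_{\mathrm{SIC}} < \xi|h_1|^2 P_{\mathrm{max}}$ (so $c < |h_1|^2$) this holds for every $t\in[0,1)$, whereas when $P_{\mathrm{SIC}}\ge \xi|h_1|^2 P_{\mathrm{max}}$ it forces $t > 1 - \frac{\xi|h_1|^2 P_{\mathrm{max}}}{P_{\mathrm{SIC}}}$; since the log-convexity argument is valid on the whole of this domain, one recovers exactly the two claimed ranges (intersected with $t < t_U$). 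I expect the main obstacle to be the convexity of $\psi$ itself: the naive split $\log_2 N - \log_2 D$ fails because the logarithm of the convex numerator need not be convex, so the crucial idea is to recognize that $q$ is log-convex and to isolate the clean scalar inequality $(e^m-1)^2\ge m^2 e^m$.
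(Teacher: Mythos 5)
Your argument is correct, and it shares the paper's skeleton: both proofs reduce concavity of $f_1$ to convexity of the inner function $g(x)=\log_2\big(\tfrac{\beta}{x-\zeta}(2^{rx}-1)+1\big)$ (your $\psi$, with $\zeta = c/|h_1|^2$) via the perspective rule, and both ultimately rest on the same scalar fact. Where you genuinely diverge is in how convexity of that logarithm is established, and your route is cleaner. The paper introduces an auxiliary constant $K>\beta$, rewrites $g(x)=\log_2\big(g_0(x)+\tfrac{K}{\beta}-1\big)+\log_2\tfrac{\beta}{K}$ with $g_0(x)=\tfrac{K}{x-\zeta}(2^{rx}-1)+1$, proves one lemma that $g_0$ is convex and another that adding $\delta>0$ to a convex log-convex function preserves log-convexity, and then shows $\log_2 g_0$ is convex by treating the numerator of its second derivative as a quadratic in $K$ whose leading coefficient $g_5(x)=(2^{rx}-1)^2-(x-\zeta)^2 2^{rx}(r\ln 2)^2$ is positive, invoking boundedness of $x$ to choose $K$ large enough. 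You bypass the constant $K$ and the quadratic/compactness step entirely by noting that the bare ratio $q$ is log-convex and that log-convexity plus the additive constant $1$ already yields convexity of $\log(1+q)$ — which is exactly the paper's shift lemma applied with $\delta=1$. Moreover, your condition $(\ln q)''\ge 0$ is precisely the paper's $g_5\ge 0$ (your bound $|h_1|^2z-c\le |h_1|^2z$ is their $x-\zeta\le x$), and your one-line proof via $(e^m-1)^2\ge m^2e^m\iff 2\sinh(m/2)\ge m$ replaces the paper's three-stage monotonicity chain through $g_5'$, $g_6$ and $g_6''$. The domain bookkeeping giving the two cases of the lemma (whether $\zeta\lessgtr 1$) is handled identically and correctly in both. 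In short: same decomposition, but a more economical and self-contained execution of the key log-convexity step.
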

\begin{proof}
See Appendix \ref{proof:concave1}.
\end{proof}

\begin{lemma} \label{lemma:concave2}
\textbf{(Concavity of $f_2(t)$)} If $P_{\mathrm{SIC}} \le \xi (|h_1|^2 P_{\mathrm{max}} + \sigma^2)$, the function $f_2(t)$ is concave for $0 \le t < 1$. If $P_{\mathrm{SIC}} > \xi (|h_1|^2 P_{\mathrm{max}} + \sigma^2)$, $f_2(t)$ is concave for $1-\frac{\xi |h_1|^2 P_{\mathrm{max}}}{P_{\mathrm{SIC}} - \xi\sigma^2} < t < 1$.
\end{lemma}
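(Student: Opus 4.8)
The plan is to prove concavity of $f_2(t)$ by computing its second derivative and showing it is non-positive on the stated interval. First I would simplify notation by introducing the constants $a = \frac{|h_1|^2 P_{\mathrm{max}}}{\sigma^2}$ and $b = \frac{P_{\mathrm{SIC}}}{\xi |h_1|^2 P_{\mathrm{max}}}$, so that the argument of the logarithm in $f_2$ becomes $1 + a\big(\frac{1}{1-t} - b\big)$. Writing $g(t) = 1 + a\big(\frac{1}{1-t} - b\big)$, we have $f_2(t) = \frac{1}{\ln 2}(1-t)\ln g(t)$, and the goal reduces to showing $f_2''(t) \le 0$ on the appropriate range.

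The key computational step is differentiating twice. I would first record $g'(t) = \frac{a}{(1-t)^2}$ and $g''(t) = \frac{2a}{(1-t)^3}$, then apply the product rule to $(1-t)\ln g$. After collecting terms, $f_2''(t)$ will be a single rational-plus-logarithmic expression whose sign is governed by its numerator. The crucial observation is that the pure logarithmic contribution vanishes at second order — the $\ln g$ term appears only with coefficient from the second derivative of $(1-t)$, which is zero — so $f_2''(t)$ should come out purely algebraic in $g$ and its derivatives, typically of the form $-\frac{1}{\ln 2}\cdot\frac{(\text{positive expression})}{(1-t)\,g(t)^2}$. I would then verify that the sign condition on the numerator is exactly equivalent to $g(t) > 0$, i.e., to the argument of the logarithm being positive, which is precisely what pins down the two regimes in the lemma. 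When $P_{\mathrm{SIC}} \le \xi(|h_1|^2 P_{\mathrm{max}} + \sigma^2)$ one checks $g(t) > 0$ holds throughout $0 \le t < 1$; otherwise $g(t) > 0$ only for $t > 1 - \frac{\xi|h_1|^2 P_{\mathrm{max}}}{P_{\mathrm{SIC}} - \xi\sigma^2}$, recovering the stated threshold by solving $g(t) = 0$ for $t$.

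The main obstacle I anticipate is the algebraic bookkeeping in reducing $f_2''(t)$ to a manifestly sign-definite form: the naive expression mixes terms with denominators $(1-t)$, $(1-t)^2$, and $g(t)$, and careful cancellation is needed to see that the non-negativity reduces cleanly to the positivity of $g(t)$. In particular I would be watchful that the term involving $g'(t)^2/g(t)^2$ (from differentiating $\ln g$ twice) combines correctly with the $g''/g$ term rather than leaving a spurious sign. Once the numerator is written over the common denominator $(1-t)\,g(t)^2\,\ln 2$, I expect the $t$-dependence to collapse so that the sign is determined solely by whether $g(t)$ is positive, making the link to the two cases transparent. The boundary value $1 - \frac{\xi|h_1|^2 P_{\mathrm{max}}}{P_{\mathrm{SIC}} - \xi\sigma^2}$ is then just the zero of $g$, and on the open interval above it $g(t) > 0$ guarantees concavity.
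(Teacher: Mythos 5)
Your proposal is correct and follows essentially the same route as the paper, which likewise just computes the second derivative; in your notation it collapses to $f_2''(t) = -\frac{a^2}{(1-t)^3\,g(t)^2\,\ln 2} \le 0$, confirming that the logarithmic term drops out at second order exactly as you anticipate. One small correction to your framing: since $g$ enters only through $g(t)^2$, the positivity of $g(t)$ does not govern the \emph{sign} of $f_2''$ (which is non-positive wherever the expression is defined) but rather the \emph{domain} on which $f_2$ exists at all, and it is this domain condition that produces the threshold $t > 1-\frac{\xi |h_1|^2 P_{\mathrm{max}}}{P_{\mathrm{SIC}} - \xi\sigma^2}$ in the second case.
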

\begin{proof}
By taking the second derivative, we find that
\begin{align}
f_2''(t) = -\frac{1}{(1-t)^3\ln 2}\bigg(\frac{1}{1-t} - \frac{P_{\mathrm{SIC}} - \xi \sigma^2}{\xi |h_1|^2 P_{\mathrm{max}}} \bigg)^{-2} \le 0
\end{align}
holds for both cases. Therefore, the lemma is proved.
\end{proof}

Based on Lemmas \ref{lemma:concave1} and \ref{lemma:concave2}, we have the following conclusion.
\begin{theorem} \label{thm:convex1}
The problem ($\mathbf{P3.1c}$) is a convex optimization problem.
\end{theorem}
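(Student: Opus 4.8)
The plan is to invoke the standard characterization of a convex program: a \emph{maximization} problem is convex precisely when its objective is concave and its feasible set is convex. For ($\mathbf{P3.1c}$) the objective is $f_1(t)$, the single inequality constraint is $f_2(t) \ge r$, and the remaining constraint $\max\{0,t_L\} \le t \le t_U$ is a box, i.e. an interval, which is trivially convex. So the proof reduces to two claims: (i) $f_1$ is concave on the feasible interval, so that maximizing it is a concave-maximization; and (ii) the set $\{t : f_2(t) \ge r\}$ intersected with the box is convex.

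First I would dispatch claim (ii) by recalling that the superlevel set $\{t : f_2(t) \ge r\}$ of a concave function is convex, and that the intersection of two convex sets is convex; Lemma~\ref{lemma:concave2} supplies the concavity of $f_2$, and the box contributes the interval. Claim (i) is then immediate from Lemma~\ref{lemma:concave1}. Thus, modulo the ranges over which the two concavity lemmas actually hold, the conclusion follows at once from the definition of a convex optimization problem.

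The hard part, and essentially the only content, is to check that the feasible interval $[\max\{0,t_L\},\, t_U]$ lies entirely inside the region where \emph{both} $f_1$ and $f_2$ are guaranteed concave. When $P_{\mathrm{SIC}} < \xi |h_1|^2 P_{\max}$ and $P_{\mathrm{SIC}} \le \xi(|h_1|^2 P_{\max} + \sigma^2)$, Lemmas~\ref{lemma:concave1} and~\ref{lemma:concave2} already yield concavity on all of $[0,t_U)$ and $[0,1)$ respectively, so nothing further is required. The delicate regime is large $P_{\mathrm{SIC}}$, where concavity of $f_1$ only holds for $t > 1 - \frac{\xi |h_1|^2 P_{\max}}{P_{\mathrm{SIC}}}$ and that of $f_2$ only for $t > 1 - \frac{\xi |h_1|^2 P_{\max}}{P_{\mathrm{SIC}} - \xi\sigma^2}$. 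Here I would show that the lower endpoint $t_L = 1 - \frac{\xi |h_1|^2 P_{\max}}{P_{\mathrm{SIC}} + \xi |h_2|^2 P_{\max}}$ of the box exceeds both thresholds. Comparing $t_L$ with the first threshold cancels the common numerator and reduces to $P_{\mathrm{SIC}} + \xi |h_2|^2 P_{\max} \ge P_{\mathrm{SIC}}$, while comparing it with the second reduces to $P_{\mathrm{SIC}} + \xi |h_2|^2 P_{\max} \ge P_{\mathrm{SIC}} - \xi\sigma^2$. Both are obvious since $\xi,\, |h_2|^2,\, P_{\max},\, \sigma^2 > 0$, and in each delicate case the relevant denominator is positive by the lemma's hypothesis. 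Hence $t_L$, and a fortiori $\max\{0,t_L\}$, sits above both concavity thresholds, so $f_1$ and $f_2$ are concave throughout the feasible interval.

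With both claims in hand, the objective $f_1$ is concave and the feasible set is the intersection of a box with the superlevel set of the concave function $f_2$, hence convex, so ($\mathbf{P3.1c}$) is a convex optimization problem. I anticipate no further obstacles; the only minor care is with strict-versus-weak inequalities at the endpoint $t_U$ and at the thresholds, which are inconsequential since concavity and convexity are preserved on closures.
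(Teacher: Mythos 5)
Your proposal is correct and follows essentially the same route as the paper: reduce convexity to concavity of $f_1$ and $f_2$ on the feasible interval via Lemmas~\ref{lemma:concave1} and~\ref{lemma:concave2}, then verify that $\max\{0,t_L\}$ exceeds both concavity thresholds, which is exactly the paper's inequality (\ref{eq:tLB}). Your endpoint comparisons (reducing to $P_{\mathrm{SIC}}+\xi|h_2|^2P_{\max}\ge P_{\mathrm{SIC}}$ and $P_{\mathrm{SIC}}+\xi|h_2|^2P_{\max}\ge P_{\mathrm{SIC}}-\xi\sigma^2$) are the same computation, handled with slightly more explicit care about the case split and sign of the denominators than the paper's one-line chain.
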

\begin{proof}
Since
\begin{align}
1-\frac{\xi |h_1|^2 P_{\mathrm{max}}}{P_{\mathrm{SIC}} - \xi\sigma^2} < 1-\frac{\xi |h_1|^2 P_{\mathrm{max}}}{P_{\mathrm{SIC}}} < t_L, \label{eq:tLB}
\end{align}
$f_1(t)$ and $f_2(t)$ are all concave functions in the range of $t$ defined by (\ref{prob:mix1ttrho}). Hence, the problem ($\mathbf{P3.1c}$) is convex \cite[Chap.~4]{boyd2004convex}.
\end{proof}

As ($\mathbf{P3.1c}$) is a convex optimization problem, there exists a unique optimal solution which either satisfies $f_1'(t) = 0$ or is the boundary point defined by (\ref{prob:mix1tr}) and (\ref{prob:mix1ttrho}). However, the equation $f_1'(t) = 0$ is difficult to be directly solved. Instead, by exploring the concavity of the objective and constraints, the optimal solution can be found by a numerical search algorithm as follows.

Firstly, the feasible set of $t$, denoted by $\mathcal{T}$, is determined according to the constraints (\ref{prob:mix1tr}) and (\ref{prob:mix1ttrho}). Denote $t_1 = \max \{ 0, t_L\}$, $t_2 = t_U$. If $f_2(t_1) \ge r$ and $f_2(t_2) \ge r$, we have $\mathcal{T} = [t_1, t_2]$. If $f_2(t_1) < r$ and $f_2(t_2) \ge r$, due to the concavity of $f_2(t)$, there is a unique $t_3$ that satisfies $f_2(t_3) = r$, which can be found by bisection search \cite[Chap.~2.1]{richard1985douglas}. Then we have $\mathcal{T} = [t_3, t_2]$. If $f_2(t_1) \ge r$ and $f_2(t_2) < r$, similarly we have $\mathcal{T} = [t_1, t_3]$. While if $f_2(t_1) < r$ and $f_2(t_2) < r$, we firstly find any $t_0$ such that $f_2(t_0) \ge r$, and then find $t_3 \in [t_1, t_0]$ and $t_4 \in [t_0, t_2]$ such that $f_2(t_3) = r$ and $f_2(t_4) = r$. The values of $t_0$, $t_3$ and $t_4$ can all be found by bisection search. Thus, the feasible set of $t$ is $\mathcal{T} = [t_3, t_4]$.

Secondly, we search over the feasible set $\mathcal{T} = [t_1, t_2]$ to find the optimal value of $f_1(t)$. If $f_1'(t_1) < 0$, we have $\max_{t \in [t_1, t_2]} f_1(t) = f_1(t_1)$ due to its concavity. Similarly, if $f_1'(t_2) > 0$, we have $\max_{t \in [t_1, t_2]} f_1(t) = f_1(t_2)$. Otherwise, the optimal solution satisfies $f_1'(t) = 0$, which can be found by bisection search as $f_1'(t)$ is monotonically non-increasing. {The algorithm is summarized in Algorithm \ref{alg:num}. The complexity of the algorithm depends on the complexity of bisection search. In particular, the bisection search terminates when a target accuracy $\epsilon>0$ is achieved, i.e., $|f_2(t_3) - r| < \epsilon$ for instance. Thus, the complexity is $\mathcal{O}(\log_2 \frac{1}{\epsilon})$.}

\begin{algorithm}[th]
\caption{Numerical Search Algorithm} \label{alg:num}
\begin{algorithmic}

\REQUIRE The problem ($\mathbf{P3.1c}$)

\ENSURE $\max_t f_1(t)$

\STATE \textbf{1. Determine the feasible set} $\mathcal{T}$

\IF {$f_2(t_1) \ge r$ and $f_2(t_2) \ge r$}

\STATE $\mathcal{T} = [t_1, t_2]$.

\ELSIF {$f_2(t_1) < r$ and $f_2(t_2) \ge r$}

\STATE Find $t_3 \in \{t_1, t_2\}$ so that $f_2(t_3) = r$ by bisection search. Then $\mathcal{T} = [t_3, t_2]$.

\ELSIF {$f_2(t_1) \ge r$ and $f_2(t_2) < r$}

\STATE Find $t_3 \in \{t_1, t_2\}$ so that $f_2(t_3) = r$ by bisection search. Then $\mathcal{T} = [t_1, t_3]$.

\ELSIF {$f_2(t_1) < r$ and $f_2(t_2) < r$}

\STATE Find $t_0 \in \{t_1, t_2\}$ so that $f_2(t_0) \ge r$ by bisection search. Then find $t_3 \in \{t_1, t_0\}$ and $t_4 \in \{t_0, t_2\}$ so that $f_2(t_3) = f_2(t_4) = r$ by bisection search. Finally, $\mathcal{T} = [t_3, t_4]$.

\ENDIF

\STATE \textbf{2. Find the optimal value in set} $\mathcal{T} = [t_1, t_2]$

\IF {$f_1'(t_1) < 0$}

\STATE $\max_t f_1(t) = f_1(t_1)$.

\ELSIF {$f_1'(t_2) > 0$}

\STATE $\max_t f_1(t) = f_1(t_2)$.

\ELSE

\STATE Find $t_0 \in \mathcal{T}$ so that $f_1'(t_0) = 0$ by bisection search. Then $\max_t f_1(t) = f_1(t_0)$.

\ENDIF

\end{algorithmic}
\end{algorithm}

\textbf{Case (1.2):} If $t \ge t_U$ on the other hand, we have $\rho = 0$. Thus, the scheme degrades to time switching, and the rate maximization problem has been solved in Section \ref{sec:ts}. According to Theorem \ref{thm:ts}, the optimal solution is achieved at $t = t_U$, which is also included in the problem ($\mathbf{P3.1c}$). Therefore, we have the following conclusion.

\begin{proposition}
The problem ($\mathbf{P3.1}$) is equivalent to the problem ($\mathbf{P3.1c}$).
\end{proposition}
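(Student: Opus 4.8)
The plan is to show that ($\mathbf{P3.1}$) and ($\mathbf{P3.1c}$) attain the same optimal value by first eliminating $\rho$ from ($\mathbf{P3.1}$), and then arguing that the part of the feasible region excluded by ($\mathbf{P3.1c}$), namely $t>t_U$, contributes nothing beyond the endpoint $t=t_U$ that ($\mathbf{P3.1c}$) already contains. First I would record, as already noted just before Case (1.1), that the objective (\ref{prob:mix1obj}) is decreasing in $\rho$, so at optimality $\rho$ is pushed to its smallest feasible value permitted by (\ref{prob:mix1Psic}) and (\ref{prob:mix1trho}), i.e. $\rho^\star(t)=\max\big\{0,\frac{P_{\mathrm{SIC}}}{\xi|h_1|^2 P_{\mathrm{max}}}-\frac{t}{1-t}\big\}$. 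Substituting $\rho^\star(t)$ reduces ($\mathbf{P3.1}$) to a single-variable problem in $t$, and the switching point of the two branches of $\rho^\star$ is exactly $t_U$: for $t\le t_U$ we have $\rho^\star(t)=\frac{P_{\mathrm{SIC}}}{\xi|h_1|^2 P_{\mathrm{max}}}-\frac{t}{1-t}\ge 0$, while for $t> t_U$ we have $\rho^\star(t)=0$.

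Next I would treat the two branches separately. On $\{t\le t_U\}$ the reduced problem is, by construction, precisely ($\mathbf{P3.1c}$), with objective $f_1(t)$ and feasible set given by (\ref{prob:mix1tr}) and (\ref{prob:mix1ttrho}). On $\{t> t_U\}$ we have $\rho^\star=0$, so the scheme collapses to pure time switching and the reduced objective coincides with $f_0(t)$ of ($\mathbf{P1}$). By Lemma \ref{lemma:mono}, $f_0$ is non-increasing, so its supremum over $\{t\ge t_U\}$ is attained at the left endpoint $t=t_U$, which is exactly the optimal time-switching point of Theorem \ref{thm:ts} identified in (\ref{eq:tsoptcon}).

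Finally I would glue the two branches at $t=t_U$. A short computation gives $\frac{t_U}{1-t_U}=\frac{P_{\mathrm{SIC}}}{\xi|h_1|^2 P_{\mathrm{max}}}$, whence $\rho^\star(t_U)=0$ from the Case (1.1) formula, and $\frac{1}{1-t_U}-\frac{P_{\mathrm{SIC}}}{\xi|h_1|^2 P_{\mathrm{max}}}=1$, so that $f_1(t_U)=f_0(t_U)$ and $f_2(t_U)$ reduces to the time-switching feasibility expression $(1-t_U)\log_2\big(1+\frac{|h_1|^2 P_{\mathrm{max}}}{\sigma^2}\big)$. Hence $t=t_U$ is the common boundary point of the two branches, it lies in the feasible set of ($\mathbf{P3.1c}$), and the best value achievable on the $t>t_U$ branch never exceeds $f_0(t_U)=f_1(t_U)$, which is already attainable inside ($\mathbf{P3.1c}$). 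Therefore $\max(\mathbf{P3.1})=\max\{\max(\mathbf{P3.1c}),\,f_0(t_U)\}=\max(\mathbf{P3.1c})$, and the two problems are equivalent.

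The main obstacle I anticipate is the boundary-matching bookkeeping: one must verify not only the value identity $f_1(t_U)=f_0(t_U)$, but also that the constraint $f_2(t)\ge r$ of ($\mathbf{P3.1c}$) reduces at $t=t_U$ to the time-switching feasibility constraint, so that the endpoint $t=t_U$ is genuinely feasible for ($\mathbf{P3.1c}$) whenever the $t>t_U$ branch is nonempty. Without this consistency check the two problems could in principle differ through an empty-versus-nonempty feasible set, so establishing the continuity of both the objective and the active constraint across $t=t_U$ is the crux of the argument.
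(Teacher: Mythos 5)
Your proposal is correct and follows essentially the same route as the paper: the paper also minimizes over $\rho$ first, splits at $t_U$ into Case (1.1) (which is exactly ($\mathbf{P3.1c}$)) and Case (1.2) (where $\rho=0$ reduces to time switching, whose optimum over $t\ge t_U$ sits at $t=t_U$ by the monotonicity of $f_0$), and notes that this endpoint is already contained in ($\mathbf{P3.1c}$). Your explicit boundary checks ($f_1(t_U)=f_0(t_U)$ and the matching of the feasibility constraints at $t=t_U$) merely spell out details the paper leaves implicit.
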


To this end, in the case that $|h_2|^2 \le (1-\rho) |h_1|^2$, the original problem degrades to the problem ($\mathbf{P3.1c}$). Hence, it can be efficiently solved by Algorithm \ref{alg:num}.

\subsubsection{Subproblem with $|h_2|^2 \ge (1-\rho) |h_1|^2$}
When $|h_2|^2 \ge (1-\rho) |h_1|^2$, the problem can be reformulated as
\begin{subequations} \label{prob:mix2}
\begin{align}
(\mathbf{P3.2}) \; \max_{t, \rho} \;&\; t \log_2 \bigg( 1+ \frac{|h_2|^2 P_{\mathrm{max}}}{\sigma^2}\bigg) + \nonumber\\
&\; (1-t) \log_2 \bigg( 1+ \frac{|h_1|^2 (1-\rho) P_{\mathrm{max}}}{\sigma^2} \bigg) -r, \label{prob:mix2obj}\\
\mathrm{s.t.} \;&\; 1-\rho \le \frac{|h_2|^2}{|h_1|^2}, \label{prob:mix2rho}\\
 \;&\; \textrm{(\ref{prob:mix1r})-(\ref{prob:mix1trho})}. \nonumber
\end{align}
\end{subequations}
As the objective function in (\ref{prob:mix2obj}) is also a decreasing function of $\rho$, it is maximized when $\rho$ achieves its minimum, which is $\rho = \max \big\{1-\frac{|h_2|^2}{|h_1|^2}, \frac{P_{\mathrm{SIC}}}{\xi |h_1|^2 P_{\mathrm{max}}} - \frac{t}{1-t} \big\}$ according to (\ref{prob:mix1Psic}) and (\ref{prob:mix2rho}). Based on the optimal value of $\rho$, there are two cases as follows.

\textbf{Case (2.1):} If $1-\frac{|h_2|^2}{|h_1|^2} \le \frac{P_{\mathrm{SIC}}}{\xi |h_1|^2 P_{\mathrm{max}}} - \frac{t}{1-t}$, i.e., $t \le t_L$, we have $\rho = \frac{P_{\mathrm{SIC}}}{\xi |h_1|^2 P_{\mathrm{max}}} - \frac{t}{1-t}$, and the problem can be reformulated as
\begin{subequations}\label{prob:mix2t}
\begin{align}
(\mathbf{P3.2c}) \quad \max_{t} \;&\; f_3(t) \label{prob:mix2tobj}\\
\mathrm{s.t.} \;&\; f_2(t) \ge r,\label{prob:mix2tf1}\\
{} \;&\; 1- \frac{\xi |h_1|^2 P_{\mathrm{max}}}{P_{\mathrm{SIC}}} < t \le t_L, \label{prob:mix2trho}\\
{} \;&\; 0 \le t < 1, \label{prob:mix2tt}
\end{align}
\end{subequations}
where the objective function is expressed as
\begin{align}
f_3(t) = t \log_2 \bigg( 1+ \frac{|h_2|^2 P_{\mathrm{max}}}{\sigma^2}\bigg) + f_2(t) - r,
\end{align}
$f_2(t)$ is expressed as (\ref{eq:f2}), and the inequality on the left hand side of (\ref{prob:mix2trho}) comes from the constraint $\rho < 1$. Notice that the problem ($\mathbf{P3.2c}$) exists only if $t_L \ge 0$, i.e., $|h_2|^2 \ge |h_1|^2 - \frac{P_{\mathrm{SIC}}}{\xi P_{\mathrm{max}}}$. Similar to the problem ($\mathbf{P3.1c}$), the problem ($\mathbf{P3.2c}$) is also convex as shown in the following theorem.
\begin{theorem} \label{thm:convex2}
The problem ($\mathbf{P3.2c}$) is a convex optimization problem.
\end{theorem}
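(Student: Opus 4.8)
The plan is to reduce the convexity of ($\mathbf{P3.2c}$) entirely to the concavity of $f_2(t)$ already established in Lemma \ref{lemma:concave2}, exploiting the fact that the objective $f_3(t)$ differs from $f_2(t)$ only by an affine term. First I would observe that $f_3(t) = t\log_2\big(1+\frac{|h_2|^2 P_{\mathrm{max}}}{\sigma^2}\big) + f_2(t) - r$, where the first summand is linear in $t$ and $-r$ is a constant. Since adding an affine function preserves concavity, $f_3(t)$ is concave on any interval on which $f_2(t)$ is concave. This is the key simplification relative to the proof of Theorem \ref{thm:convex1}: no separate concavity argument for the objective is needed, because $f_3(t)$ inherits its curvature directly from $f_2(t)$.

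Next I would verify that $f_2(t)$ is indeed concave throughout the feasible range of ($\mathbf{P3.2c}$). The feasible set lies in the interval (\ref{prob:mix2trho}), whose left endpoint is $1-\frac{\xi |h_1|^2 P_{\mathrm{max}}}{P_{\mathrm{SIC}}}$. By the chain of inequalities (\ref{eq:tLB}) established in the proof of Theorem \ref{thm:convex1}, this endpoint satisfies $1-\frac{\xi |h_1|^2 P_{\mathrm{max}}}{P_{\mathrm{SIC}}-\xi\sigma^2} < 1-\frac{\xi |h_1|^2 P_{\mathrm{max}}}{P_{\mathrm{SIC}}}$, so every feasible $t$ lies to the right of $1-\frac{\xi |h_1|^2 P_{\mathrm{max}}}{P_{\mathrm{SIC}}-\xi\sigma^2}$. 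Hence, in the harder case $P_{\mathrm{SIC}} > \xi(|h_1|^2 P_{\mathrm{max}}+\sigma^2)$ of Lemma \ref{lemma:concave2}, the feasible range falls inside the concavity interval, while in the easier case $f_2(t)$ is already concave on all of $[0,1)$. Either way $f_2(t)$, and therefore $f_3(t)$, is concave on the feasible set.

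Finally I would assemble the pieces. The constraint (\ref{prob:mix2tf1}), $f_2(t)\ge r$, describes a superlevel set of the concave function $f_2(t)$, which is convex; the remaining constraints (\ref{prob:mix2trho}) and (\ref{prob:mix2tt}) are simple intervals and hence convex; their intersection, the feasible set, is therefore convex. Maximizing the concave objective $f_3(t)$ over this convex feasible set is by definition a convex optimization problem \cite[Chap.~4]{boyd2004convex}. The only delicate point is the concavity check: one must guarantee that $f_2(t)$ remains concave on the feasible range even when $P_{\mathrm{SIC}}$ is large, which is precisely where the containment supplied by the inequality chain (\ref{eq:tLB}) does the work. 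Once that containment is in hand, the remainder of the argument is immediate and mirrors the proof of Theorem \ref{thm:convex1}, only more directly, since the objective need not be treated separately from $f_2(t)$.
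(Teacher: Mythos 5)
Your argument is correct and is essentially the paper's own proof: both use the inequality chain (\ref{eq:tLB}) together with Lemma \ref{lemma:concave2} to place the feasible interval inside the concavity region of $f_2(t)$, and then note that $f_3(t)$ is a linear function plus $f_2(t)$ (minus a constant), hence concave, so the problem is convex. Your version merely spells out the superlevel-set and affine-preservation steps that the paper leaves implicit.
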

\begin{proof}
Based on (\ref{eq:tLB}) and Lemma \ref{lemma:concave2}, the function $f_2(t)$ is concave in the feasible range defined by (\ref{prob:mix2trho}) and (\ref{prob:mix2tt}). Hence, the problem ($\mathbf{P3.2c}$) is convex as $f_3(t)$ is the summation of a linear function and a concave function.
\end{proof}
Therefore, the problem ($\mathbf{P3.2c}$) can also be solved by a numerical search algorithm similar to Algorithm \ref{alg:num}.

\textbf{Case (2.2):} If $t \ge t_L$ on the other hand, we have $\rho = 1-\frac{|h_2|^2}{|h_1|^2}$, which is in fact considered and solved in the problem ($\mathbf{P3.1}$).

In summary, the optimal solution for ($\mathbf{P0}$) can be obtained depending on the channel coefficients $h_1, h_2$ and the target rate of UE 1 $R_1 = r$. If $t_L < 0$, i.e., $|h_2|^2 < |h_1|^2 - \frac{P_{\mathrm{SIC}}}{\xi P_{\mathrm{max}}}$, $R_2$ is only constrained by the decoding ability of UE 2, and the problem ($\mathbf{P3.2c}$) is infeasible. Hence, the optimal solution of the original problem ($\mathbf{P0}$) is equivalent to that of the subproblem ($\mathbf{P3.1c}$). If $|h_2|^2 \ge |h_1|^2 - \frac{P_{\mathrm{SIC}}}{\xi P_{\mathrm{max}}}$ on the other hand, the optimal solution for ($\mathbf{P0}$) is the maximum between the solutions for the subproblems ($\mathbf{P3.1c}$) and ($\mathbf{P3.2c}$). The procedure is detailed in Algorithm \ref{alg:gen}.

\begin{algorithm}[th]
\caption{Finding Achievable Rate Region for Generalized Scheme} \label{alg:gen}
\begin{algorithmic}

\REQUIRE $h_1, h_2, R_1 = r$

\ENSURE $R_2$

\IF {$|h_2|^2 < |h_1|^2 - \frac{P_{\mathrm{SIC}}}{\xi P_{\mathrm{max}}}$}

\STATE Find the optimal solution $R_{2,1}$ of subproblem ($\mathbf{P3.1c}$). Then $R_2 = R_{2,1}$.

\ELSE

\STATE Find the optimal solution $R_{2,1}$ of subproblem ($\mathbf{P3.1c}$).

\STATE Find the optimal solution $R_{2,2}$ of subproblem ($\mathbf{P3.2c}$).

\STATE $R_2 = \max\{R_{2,1}, R_{2,2}\}$.

\ENDIF

\end{algorithmic}
\end{algorithm}

It is easily found that the optimal solution is achieved when the decoding power constraint (\ref{prob:mix1Psic}) is satisfied with equality, which is the same with previous two cases. Therefore, the most \emph{economical} way of splitting the power is also optimal for the generalized scheme. Also notice that the condition $|h_2|^2 < |h_1|^2 - \frac{P_{\mathrm{SIC}}}{\xi P_{\mathrm{max}}}$ is consistent with Theorem \ref{thm:ps}. It determines either (\ref{eq:R2_2}) or (\ref{eq:R2_2sic}) is more stringent. It can be seen that (\ref{eq:R2_2sic}) is a critical condition for wireless powered NOMA system. Since part of the received radio signal is split to energy harvester, the signal for information decoding is lowered so that the received SINR for UE 2 in SIC may be lower than that in UE 2. The two cases need to be dealt with separately.

\subsection{Achievable Rate Regions and Discussions} \label{sec:constfig}
Based on Theorems \ref{thm:ts}, \ref{thm:ps} and Algorithm \ref{alg:gen}, the achievable regions of time switching scheme, power splitting scheme and generalized scheme can be depicted. { In particular, the line-of-sight pathloss model $\mathrm{PL} = 30.8 + 24.2 \log_{10} (d)$ (in dB) is adopted \cite{3GPP2010TR}, where $d$ is the transmission distance, and $P_{\max} = 40$ W. The noise power spectral density is -174 dBm/Hz, the bandwidth is 10 MHz, thus we have $\sigma^2 = -104$ dBm. We also set $P_{\mathrm{SIC}} = 80$ mW and $\xi = 0.5$.}

Firstly, the transmit distances for the users are set to $d_1 = 0.5$ m and $d_2 = 10$ m, respectively. In this case, the parameters satisfy $|h_1|^2 > \frac{P_{\mathrm{SIC}}}{\xi P_{\mathrm{max}}}$ and $|h_2|^2 < |h_1|^2 - \frac{P_{\mathrm{SIC}}}{\xi P_{\mathrm{max}}}$. The result is shown in Fig.~\ref{fig:RateNormal}. It can be seen that the proposed generalized scheme achieves a larger rate region compared with the conventional time switching scheme and power splitting scheme. There is a trade-off between the time switching scheme and the power splitting scheme. When $R_1$ is small, the achievable rate of UE 2 with the time switching scheme is larger than that with the power splitting scheme. When $R_2$ is large, the relation reverses. In addition, the \emph{cut-off} line for time switching scheme, $R_1 = \frac{\xi |h_1|^2 P_{\mathrm{max}}}{\xi |h_1|^2 P_{\mathrm{max}} + P_{\mathrm{SIC}}} \log_2 \big( 1+ \frac{|h_1|^2P_{\mathrm{max}}}{\sigma^2}\big)$, is depicted in the figure. {The proposed scheme is also compared with an conventional orthogonal multiple access scheme, TDMA. In TDMA, the BS firstly transmits data to UE 2, while UE 1 harvests energy simultaneously. Then the BS transmits data to UE 1. If the harvested energy is not enough, UE 1 applies power splitting scheme to harvest more energy. It can be seen that the proposed scheme performs better than the conventional TDMA scheme.}

\begin{figure}
\centering
\includegraphics[width=3.4in]{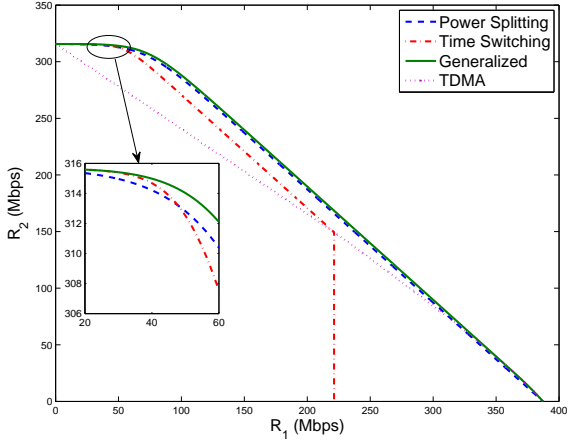}
\caption{Achievable rate regions for constant decoding power ($P_{\mathrm{max}} = 40$ W, $\sigma^2 = -104$ dBm, $P_{\mathrm{SIC}} = 80$ mW, $\xi = 0.5$, $d_1 = 0.5$ m, $d_2 = 10$ m).} \label{fig:RateNormal}
\end{figure}

In Fig.~\ref{fig:RatePSlinear}, we set $d_2 = 1.2$ m so that $|h_2|^2 > |h_1|^2 - \frac{P_{\mathrm{SIC}}}{\xi P_{\mathrm{max}}} > 0$. It can be seen that the boundary curves for the time switching scheme and the generalized scheme overlaps when $R_1 \le \frac{\xi |h_1|^2 P_{\mathrm{max}}}{\xi |h_1|^2 P_{\mathrm{max}} + P_{\mathrm{SIC}}} \log_2 \big( 1+ \frac{|h_1|^2P_{\mathrm{max}}}{\sigma^2}\big)$, which means that $\rho = 0$ is optimal. The reason is that in this condition, the decoding ability of UE 1 becomes the bottleneck, i.e., $R_2$ is constrained by the maximum supportable rate of UE 2's signal decoding in SIC. Hence, all the received power should be split for information decoding in the second sub-slot in order to get a higher SINR. Besides, the boundary curve for the power splitting scheme forms a straight line, which is consistent with Theorem \ref{thm:ps}.

\begin{figure}
\centering
\includegraphics[width=3.4in]{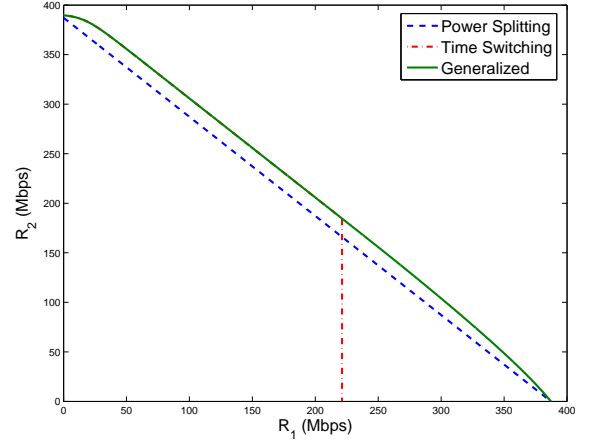}
\caption{Achievable rate regions for constant decoding power ($P_{\mathrm{max}} = 40$ W, $\sigma^2 = -104$ dBm, $P_{\mathrm{SIC}} = 80$ mW, $\xi = 0.5$, $d_1 = 0.5$ m, $d_2 = 1.2$ m).} \label{fig:RatePSlinear}
\end{figure}

Further, the infeasible case for power splitting (i.e., $|h_1|^2 < \frac{P_{\mathrm{SIC}}}{\xi P_{\mathrm{max}}}$) is illustrated in Fig.~\ref{fig:RatePSfail}. In this case, the achievable rate region of the power splitting scheme becomes a straight line with $R_1 = 0$. It can been seen that there is also a \emph{cut-off} line for the achievable rate region of the generalized scheme. Since the pure power splitting scheme is infeasible, the existence of this line comes from the nature of the time switching scheme, i.e., $t$ is strictly larger than 0. The \emph{cut-off} line can be expressed as $R_1 = \max_{0\le t \le 1} f_1(t)$.

\begin{figure}
\centering
\includegraphics[width=3.4in]{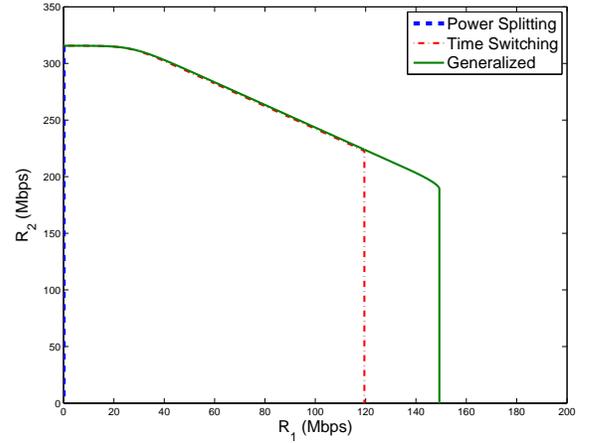}
\caption{Achievable rate regions for constant decoding power ($P_{\mathrm{max}} = 40$ W, $\sigma^2 = -104$ dBm, $P_{\mathrm{SIC}} = 80$ mW, $\xi = 0.5$, $d_1 = 0.75$ m, $d_2 = 10$ m).} \label{fig:RatePSfail}
\end{figure}


\section{Dynamic Decoding Power Consumption Case} \label{sec:dyn}
Recently, the relation between decoding performance and number of iterations (proportional to decoding power consumption) has been found. For instance, an iterative decoding method for LDPC code was studied in \cite[Fig.~8]{kou2001low}. It is shown that to achieve the same error rate, the number of iterations is reduced with the increase of $E_b/N_0$, which corresponds to a decrease of information rate under a fixed SNR. Based on this, dynamic decoding power consumption is considered in this section. It is expected that the achievable rate region can be enlarged compared with dynamic power model. Take the power splitting scheme as an example, the stringent requirement on $h_1$ can be loosen under the dynamic power model, as one can reduce the required decoding power by sacrificing some rate. With this dynamic model, the equality conditions (see Lemma \ref{lemma:equal}) may not hold any more. Consequently, the problem becomes more complex. To solve the problem, both exhaustive search algorithm and low-complex suboptimal algorithm will be proposed.

\subsection{Dynamic Power Model and Problem Formulation}
As shown in \cite[Theorem 1]{blake2015energy}, the decoding power consumption is lower-bounded as
\begin{align}
P_{\mathrm{DEC}} \ge \frac{\omega R}{\sqrt{-\log_2(2p_e)}},
\end{align}
where $\omega$ is a constant parameter related to circuit technology and coding block length, $R$ is the information rate, and $p_e$ is the symbol error rate, which is a function of SINR $\gamma$. Specifically, the symbol error rate for BPSK is expressed as
\begin{align}
p_e = Q(\sqrt{\gamma}) = \int_{\sqrt \gamma}^{+\infty} \frac{1}{\sqrt{2\pi}} e^{-\frac{x^2}{2}} \mathrm{d}x.
\end{align}
{In this paper, the symbol error rate of BPSK is adopted as an approximation. As BPSK usually achieves the lowest error rate, the approximated decoding power consumption is a lower bound. Although there may be a performance gap in practice, the bound is useful to demonstrate the behavior of NOMA system under dynamic power settings.} With this model, the total power consumption of UE 1 is
\begin{align}
P_{\mathrm{SIC}} = \frac{\omega R_1^{(2)}}{\sqrt{-\log_2(2Q(\sqrt{\gamma_1}))}} + \frac{\omega R_2^{(2)}}{\sqrt{-\log_2(2Q(\sqrt{\gamma_2}))}} + P_r, \label{eq:Psicdyn}
\end{align}
where $P_r$ is the constant power consumption of analog receive circuit including filter, low noise amplifier, analog-to-digital converter and so on \cite{cui2005energy}, and
\begin{align}
\gamma_1 &= \frac{|h_1|^2 (1-\rho) P_1^{(2)}}{\sigma^2}, \label{eq:gamma1}\\
\gamma_2 &= \frac{|h_1|^2 (1-\rho) P_2^{(2)}}{|h_1|^2 (1-\rho) P_1^{(2)}+\sigma^2}. \label{eq:gamma2}
\end{align}

In dynamic decoding power case, some constraints are still satisfied with equality as shown in the following lemma.
\begin{lemma} \label{lemma:equal2}
\textbf{(Equality Constraints)} When the maximum of problem ($\mathbf{P0}$) with dynamic power model (\ref{eq:Psicdyn}) is achieved, the constraints (\ref{eq:Pmax_1}), (\ref{eq:R2_1}), and (\ref{eq:Pmax_2}) are satisfied with equality.
\end{lemma}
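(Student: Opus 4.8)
The plan is to argue by contradiction through local perturbations, exactly in the spirit of Lemma \ref{lemma:equal}, but now keeping track of the fact that under (\ref{eq:Psicdyn}) the quantity $P_{\mathrm{SIC}}$ is itself a function of the decision variables. The single structural fact I would isolate first is that the first-sub-slot variables $P_2^{(1)}$ and $R_2^{(1)}$ enter neither (\ref{eq:Psicdyn}) nor the decoding-power (right-hand) side of (\ref{eq:Psic}): they appear only in the objective (\ref{prob:mixobj}), in the caps (\ref{eq:Pmax_1}) and (\ref{eq:R2_1}), and in the harvested-energy (left-hand) side of (\ref{eq:Psic}). I would also record the monotonicity that, for $\gamma > 0$, the per-bit decoding cost $\omega/\sqrt{-\log_2(2 Q(\sqrt{\gamma}))}$ is strictly decreasing in $\gamma$ (since $Q$ is decreasing), so that raising an SINR relaxes rather than tightens the energy balance.

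For (\ref{eq:R2_1}): the variable $R_2^{(1)}$ occurs only in the objective, with coefficient $t$, and is otherwise bounded solely by (\ref{eq:R2_1}). Hence if $t>0$ and (\ref{eq:R2_1}) were slack, I could raise $R_2^{(1)}$ to its cap and strictly increase the objective, while (\ref{eq:Psic}) is unaffected; so the constraint is tight (for $t=0$ it may be taken tight without loss). For (\ref{eq:Pmax_1}): suppose $P_2^{(1)} < P_{\mathrm{max}}$. Increasing $P_2^{(1)}$ strictly raises the harvested-energy term $t\xi|h_1|^2 P_2^{(1)}$ on the left of (\ref{eq:Psic}) while leaving its right side (the dynamic $P_{\mathrm{SIC}}$) unchanged, so feasibility is preserved; at the same time it raises the cap in (\ref{eq:R2_1}), which by the previous step lets me raise $R_2^{(1)}$ and strictly improve the objective. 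This contradicts optimality, forcing $P_2^{(1)} = P_{\mathrm{max}}$.

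For (\ref{eq:Pmax_2}): suppose $P_1^{(2)} + P_2^{(2)} < P_{\mathrm{max}}$. I would perturb by increasing $P_2^{(2)}$ alone by a small $\delta > 0$. This does three favorable things simultaneously: it raises $\gamma_2$ in (\ref{eq:gamma2}), which by the recorded monotonicity lowers $P_{\mathrm{SIC}}$ and hence the right side of (\ref{eq:Psic}); it raises the harvested-energy left side of (\ref{eq:Psic}); and it raises the cap on $R_2^{(2)}$ in both (\ref{eq:R2_2}) and (\ref{eq:R2_2sic}), which are both increasing in $P_2^{(2)}$. Using the slack thereby created in (\ref{eq:Psic}), I can then raise $R_2^{(2)}$ and strictly increase the objective term $(1-t)R_2^{(2)}$ (recall $t<1$), contradicting optimality and giving $P_1^{(2)}+P_2^{(2)} = P_{\mathrm{max}}$.

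The main obstacle I expect is this last case, because the perturbation of $P_2^{(2)}$ moves the harvested energy, the SINR-dependent decoding power, and the rate cap all at once. I must verify that a joint infinitesimal increase of the pair $(P_2^{(2)}, R_2^{(2)})$ can be chosen to keep (\ref{eq:Psic}) satisfied while strictly increasing $R_2^{(2)}$, i.e.\ that the slack opened on the right of (\ref{eq:Psic}) (from the larger $\gamma_2$ and larger harvested energy) dominates, to first order, the extra decoding cost of the higher $R_2^{(2)}$ carried by the numerator term in (\ref{eq:Psicdyn}). A continuity argument suffices: at $\delta = 0$ the energy constraint holds with strict slack, so by continuity a strictly positive increment of $R_2^{(2)}$ remains feasible. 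This is exactly the point at which the dynamic model departs from the constant-power case: here $R_1^{(2)}$ and $R_2^{(2)}$ themselves cost power through (\ref{eq:Psicdyn}), so, unlike in Lemma \ref{lemma:equal}, the rate caps (\ref{eq:R1_2}) and (\ref{eq:R2_2}) need not bind, and only the three power and first-slot constraints can be guaranteed tight.
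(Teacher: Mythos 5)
Your proof is correct and follows essentially the same route as the paper: the first two constraints are handled exactly as in Lemma \ref{lemma:equal}, and the equality of (\ref{eq:Pmax_2}) is obtained by the same contradiction argument, exploiting that $\sqrt{-\log_2(2Q(\sqrt{\gamma_2}))}$ increases with $P_2^{(2)}$ so that a joint increase of $P_2^{(2)}$ and $R_2^{(2)}$ preserves the energy constraint while raising the objective. Your version is marginally more explicit about the first-order/continuity bookkeeping, but the argument is the same.
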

\begin{proof}
See Appendix \ref{proof:lemma2}.
\end{proof}

Different from Lemma \ref{lemma:equal}, the equality of (\ref{eq:R1_2}) and (\ref{eq:R2_2}) (or (\ref{eq:R2_2sic})) is not guaranteed as the users may sacrifice their data rate to reduce the decoding power. Based on Lemma \ref{lemma:equal2}, ($\mathbf{P0}$) can be reformulated as follows with variables $t, \rho, R_2^{(2)}$ and $P_1^{(2)}$.
\begin{subequations}\label{prob:dyn}
\begin{align}
(\mathbf{P4})  \max_{t, \rho, R_2^{(2)}, P_1^{(2)}} \;&\; t \log_2 \bigg( 1+ \frac{|h_2|^2 P_{\mathrm{max}}}{\sigma^2}\bigg) + (1-t)R_2^{(2)} \label{prob:dynobj}\\
\mathrm{s.t.} \quad\;&\; \log_2(1 + \gamma_1) \ge \frac{r}{1-t}, \label{prob:R1dyn}\\
\;&\; R_2^{(2)} \le \min \{R_{2,1}, R_{2,2} \}, \label{prob:dynR2}\\
\;&\; \xi |h_1|^2 P_{\mathrm{max}}\Big(\frac{t}{1\!-\!t} \!+\! \rho\Big) \ge \nonumber\\
\;&\; \qquad \frac{\omega r}{(1\!-\!t) \sqrt{-\!\log_2(2Q(\sqrt{\gamma_1}))}} \!+\! \nonumber\\
\;&\; \qquad\qquad \frac{\omega R_2^{(2)}}{\sqrt{-\!\log_2(2Q(\sqrt{\gamma_2}))}} \!+\! P_r, \label{prob:dynPsic} \\
\;&\; 0\le t < 1, 0 \le \rho < 1,
\end{align}
\end{subequations}
where $\gamma_1$ and $\gamma_2$ are expressed as (\ref{eq:gamma1}) and (\ref{eq:gamma2}), respectively, $P_2^{(2)} = P_{\mathrm{max}} - P_1^{(2)}$, and
\begin{align}
R_{2,1} &= \log_2 \bigg( 1 + \frac{|h_1|^2 (1-\rho) P_2^{(2)}}{|h_1|^2 (1-\rho) P_1^{(2)}+\sigma^2}\bigg), \label{prob:dynR21} \\
R_{2,2} &= \log_2 \bigg( 1 + \frac{|h_2|^2P_2^{(2)}}{|h_2|^2P_1^{(2)}+\sigma^2}\bigg). \label{prob:dynR22}
\end{align}
Notice that the constraint (\ref{prob:dynPsic}) can be rewritten as
\begin{align}
R_2^{(2)} \le &\; \frac{\sqrt{-\!\log_2(2Q(\sqrt{\gamma_2}))}}{\omega} \bigg( \xi |h_1|^2 P_{\mathrm{max}}\Big(\frac{t}{1\!-\!t} \!+\! \rho\Big) \!-\! \nonumber\\
&\; \quad \frac{\omega r}{(1\!-\!t) \sqrt{-\!\log_2(2Q(\sqrt{\gamma_1}))}} \!-\! P_r \bigg) \nonumber\\
\buildrel def \over = &\; R_{2,\mathrm P}. \label{prob:dynR23}
\end{align}
Thus, the optimal value of $R_2^{(2)}$ can be written as $R_2^{(2)} = \min\{R_{2,\mathrm P}, R_{2,1}, R_{2,2}\}$ which is a function of $t$, $\rho$, and $P_1^{(2)}$, .

Due to the existence of $Q$ function in the power constraint (\ref{prob:dynPsic}), the problem $(\mathbf{P4})$ is quite complex, and well-structured solution is difficult to be found. In the following subsection, we propose optimal and suboptimal algorithms to solve the problem.

\subsection{Optimal and Suboptimal Algorithms}
The optimal algorithm is based on exhaustive search. By exploring the monotonicity of (\ref{prob:R1dyn}), the search range can be restricted. For a given target $r \in \Big[0, \log_2 \big( 1 + \frac{|h_1|^2 P_{\mathrm{max}}}{\sigma^2} \big) \Big]$, we have
\begin{align}
t &\le 1 - \frac{r}{\log_2 \big( 1 + \frac{|h_1|^2 P_{\mathrm{max}}}{\sigma^2} \big)} \buildrel def \over = t_{\mathrm{max}}.
\end{align}
So the search range of $t$ is $t \in [0, t_{\mathrm{max}} ]$. Once both $r$ and $t$ are given, we have
\begin{align}
\rho &\le 1 - \frac{\sigma^2}{|h_1|^2 P_1^{(2)}} \big( 2^{\frac{r}{1-t}} - 1 \big) \nonumber\\
&\le 1 - \frac{\sigma^2}{|h_1|^2 P_{\mathrm{max}}} \big( 2^{\frac{r}{1-t}} - 1 \big) \buildrel def \over = \rho_{\mathrm{max}}.
\end{align}
Hence, the search range of $\rho$ is $\rho \in [0, \rho_{\mathrm{max}} ]$. Finally, we have
\begin{align}
P_1^{(2)} \ge \frac{\sigma^2}{|h_1|^2 (1-\rho)} \big( 2^{\frac{r}{1-t}} - 1 \big) \buildrel def \over = P_{\mathrm{min}}.
\end{align}
Therefore, the search range of $P_1^{(2)}$ is $P_1^{(2)} \in [ P_{\mathrm{min}}, P_{\mathrm{max}} ]$. Since the constraint (\ref{prob:R1dyn}) is automatically satisfied in this range, the feasibility only depends on if $R_{2, \mathrm P}$ is nonnegative.

As $t$, $\rho$ and $P_1^{(2)}$ are continuous variables, to search over the feasible ranges numerically, the feasible regions are discretized by step sizes $\delta t$, $\delta \rho$, and $\delta P$. The selection of the step sizes determines overall search time and accuracy. {The optimality is guaranteed with an acceptable accuracy by sufficiently small granularity of discretization.}

{As the exhaustive search algorithm is time consuming due to its high computational complexity, we further propose a low-complex suboptimal algorithm inspired by the constant decoding power case, where the constraint (\ref{eq:R1_2}) is satisfied with equality. Assume the equality holds in the dynamic decoding power case, then $P_1^{(2)}$ can be represented in terms of $t$ and $\rho$ based on (\ref{prob:R1dyn}), i.e., $P_1^{(2)} = P_{\mathrm{min}}$. As a result, the search over $P_1^{(2)}$ can be omitted, and the complexity is greatly reduced from $O(N^3)$ to $O(N^2)$ where $N$ is the number of iterations for each parameter. The algorithm is summarized as Algorithm \ref{alg:exh}. The operation $\lfloor x \rfloor$ denotes the maximum integer no larger than $x$. It is shown later in the numerical results that the suboptimal algorithm actually achieves the optimal solution in many cases.
}

\begin{algorithm}[th]
\caption{Suboptimal Search Algorithm} \label{alg:exh}
\begin{algorithmic}[1]

\REQUIRE $h_1, h_2, R_1 = r$

\ENSURE $R_2$

\STATE Initialize $R_2 = 0$.

\FORALL {$t = 0, \delta t, 2\delta t, \cdots, \lfloor \frac{t_{\mathrm{max}}}{\delta t} \rfloor \delta t$}

\FORALL {$\rho = 0, \delta \rho, 2\delta \rho, \cdots, \lfloor \frac{\rho_{\mathrm{max}}}{\delta \rho} \rfloor \delta \rho$}

\STATE Set $P_1^{(2)} = P_{\mathrm{min}}$, and calculate $R_{2, \mathrm P}$, $R_{2,1}$, and $R_{2,2}$ according to (\ref{prob:dynR23}), (\ref{prob:dynR21}), and (\ref{prob:dynR22}), respectively.

\IF {$R_{2, \mathrm P} \ge 0$}

\STATE Calculate $R_{2, \mathrm{temp}} = t \log_2 \big( 1+ \frac{|h_2|^2 P_{\mathrm{max}}}{\sigma^2}\big) + (1-t) \min\{ R_{2,1}, R_{2,2}, R_{2,\mathrm P}\}$.

\IF {$R_{2, \mathrm{temp}} > R_2$}

\STATE Update $R_2 = R_{2, \mathrm{temp}}$.
\ENDIF

\ENDIF

\ENDFOR
\ENDFOR

\end{algorithmic}
\end{algorithm}

\subsection{Achievable Rate Regions and Discussions}
In numerical results, the channel model is the same as Section \ref{sec:constfig}. For the dynamic power consumption model, we set $P_r = 30$ mW and $\omega = 0.044$ so that the maximum power consumption equals to 80 mW for fair comparison. {The step sizes for exhaustive search are $\delta t = 10^{-3}$, $\delta \rho = 10^{-3}$ and $\delta P = 0.1$ dB.}

In Fig.~\ref{fig:RateNormaldyn}, the distances of the two users are set as $d_1 = 0.5$ m, $d_2 = 10$ m. It can be seen that the relationship among the curves is similar with that in Fig.~\ref{fig:RateNormal}, {and the suboptimal algorithm performs the same as the exhaustive search one}. However, the achievable rate regions are not convex any more. For instance, there is a inflection point on the curve for the generalized scheme at $R_1 \approx 340$ Mbps. The reason is that in dynamic decoding power case, the problem $(\mathbf{P4})$ may have multiple local optimal points, among which the global optimal one varies for different values of $r$. It is worth noting that a convex rate region can be obtained by time sharing, which is depicted with dotted lines.

\begin{figure}
\centering
\includegraphics[width=3.4in]{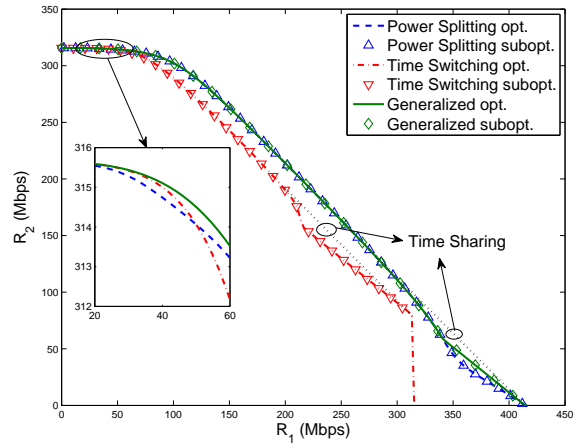}
\caption{Achievable rate regions for dynamic decoding power ($P_{\mathrm{max}} = 40$ W, $\sigma^2 = -104$ dBm, $P_{r} = 30$ mW, $\omega = 0.044$, $\xi = 0.5$, $d_1 = 0.5$ m, $d_2 = 10$ m).} \label{fig:RateNormaldyn}
\end{figure}

%

In Fig.~\ref{fig:RatePSfaildyn}, the distances of the two users are set as $d_1 = 0.75$ m, $d_2 = 10$ m. Different from constant power consumption case where the power splitting scheme is infeasible, a non-zero rate region is achievable in dynamic power consumption case by reducing the decoding power to meet the available harvested energy. Thus, adapting decoding power to data rate helps to enhance the feasibility of the power splitting scheme. {In the power splitting case, the suboptimal algorithm performs worse than the optimal one, and is infeasible when $R_1 < 160$ Mbps. The reason is that to keep (\ref{prob:R1dyn}) satisfied with equality, the decoding power cannot be reduced by increasing SNR of UE 1, as its data rate increases simultaneously.}

\begin{figure}
\centering
\includegraphics[width=3.4in]{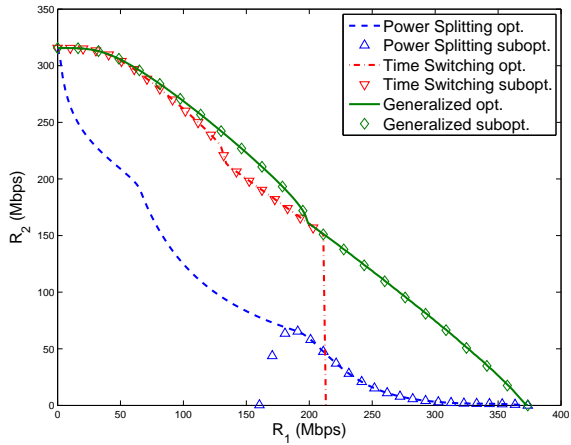}
\caption{Achievable rate regions for dynamic decoding power ($P_{\mathrm{max}} = 40$ W, $\sigma^2 = -104$ dBm, $P_{r} = 30$ mW, $\omega = 0.044$, $\xi = 0.5$, $d_1 = 0.75$ m, $d_2 = 10$ m).} \label{fig:RatePSfaildyn}
\end{figure}

\section{Extended Results and Comparison} \label{sec:comp}
In this section, extended numerical results are presented to show the influence of power consumption and energy harvesting efficiency. We fix the parameters $P_{\mathrm{max}} = 40$ W, $\sigma^2 = -104$ dBm, $\omega = 0.044$, $d_1 = 0.5$ m, $d_2 = 10$ m, and change the values of $P_{\mathrm{SIC}}$, $P_r$ and $\xi$. Notice that when $\omega = 0.044$, the maximum power consumption for SIC decoding is approximately 50 mW in our settings. Thus, we set $P_{\mathrm{SIC}} - P_r = 50$ mW so that the comparison between static power model and dynamic power model is fair.

The performance comparison for the generalized scheme with different power consumption is depicted in Fig.~\ref{fig:MixvsP}. It can be seen that the achievable rate region for dynamic power model is larger than that for static power model. In addition, with the increase of decoding power consumption, the rate regions shrink. It is worth noting that the regions shrink towards $R_1$-axis as only the near user is influenced by the decoding power. Furthermore, only the curve with static power $P_{\mathrm{SIC}} = 100$ mW has a cut-off line since $\xi |h_1|^2 P_{\mathrm{max}} < 100$ mW. It can be predicted that if $P_r \ge 100$ mW, there is also a cut-off line on the curve with dynamic power model.

\begin{figure}
\centering
\includegraphics[width=3.4in]{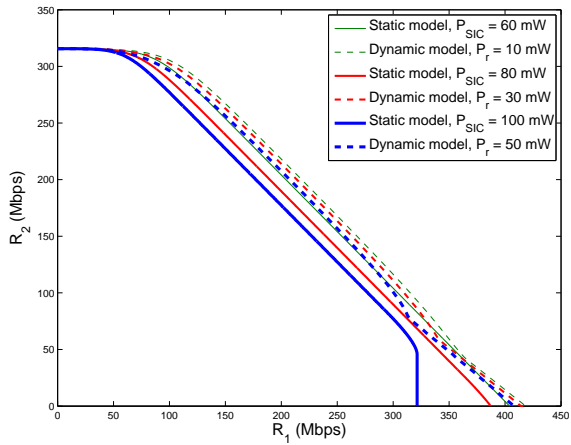}
\caption{Achievable rate regions with generalized scheme versus power consumption ($P_{\mathrm{max}} = 40$ W, $\sigma^2 = -104$ dBm, $\omega = 0.044$, $\xi = 0.5$, $d_1 = 0.5$ m, $d_2 = 10$ m).} \label{fig:MixvsP}
\end{figure}

Fig.~\ref{fig:TSvsP} shows that the achievable rate regions of the time switching scheme are of the same shape, i.e., there is always a cut-off line. In addition, with the increase of power consumption, the difference between the cut-off lines under static power model and dynamic power model becomes small.

\begin{figure}
\centering
\includegraphics[width=3.4in]{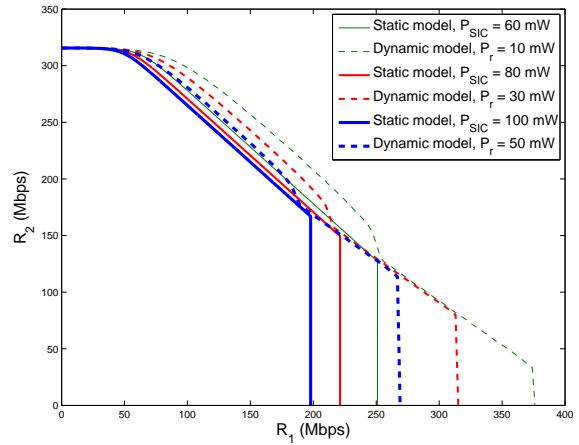}
\caption{Achievable rate regions with time switching scheme versus power consumption ($P_{\mathrm{max}} = 40$ W, $\sigma^2 = -104$ dBm, $\omega = 0.044$, $\xi = 0.5$, $d_1 = 0.5$ m, $d_2 = 10$ m).} \label{fig:TSvsP}
\end{figure}

Then, the performance comparison of the power splitting scheme is shown in Fig.~\ref{fig:PSvsP}. Similarly, the rate region for the dynamic decoding power model is larger than that for the constant decoding power model. In addition, when power splitting scheme is infeasible under constant decoding power model, it still works well under dynamic decoding power model. In the dynamic decoding power model, the performance change over power consumption is gradual.

\begin{figure}
\centering
\includegraphics[width=3.4in]{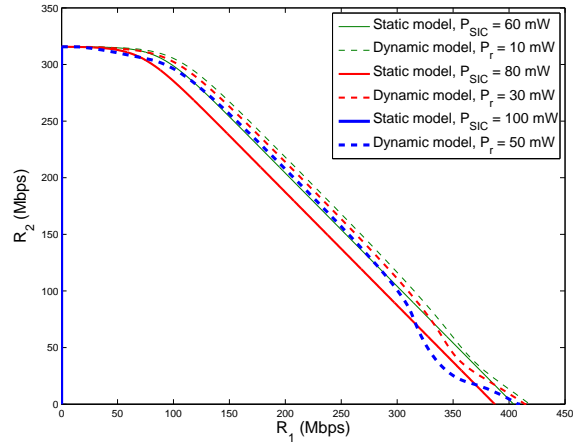}
\caption{Achievable rate regions with power splitting scheme versus power consumption ($P_{\mathrm{max}} = 40$ W, $\sigma^2 = -104$ dBm, $\omega = 0.044$, $\xi = 0.5$, $d_1 = 0.5$ m, $d_2 = 10$ m).} \label{fig:PSvsP}
\end{figure}

{Finally, the influence of energy harvesting efficiency $\xi$ on the rate region is depicted in Fig.~\ref{fig:MixvsXi}. It is shown that the rate regions are enlarged with the increase of the energy harvesting efficiency. The result is similar as Fig.~\ref{fig:MixvsP}, because increasing energy harvesting efficiency has similar impact as decreasing power consumption.
}

\begin{figure}
\centering
\includegraphics[width=3.4in]{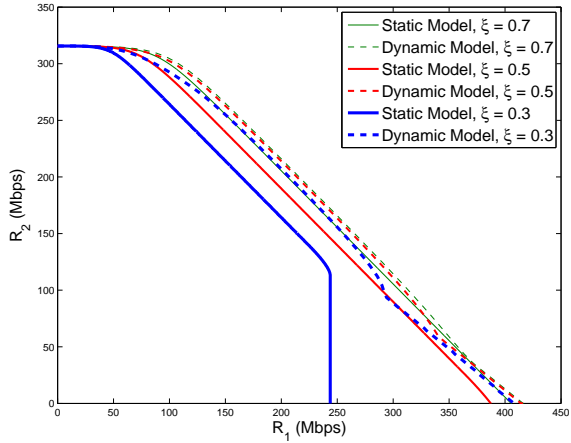}
\caption{Achievable rate regions with generalized scheme versus $\xi$ ($P_{\mathrm{max}} = 40$ W, $\sigma^2 = -104$ dBm, $P_{\mathrm{SIC}} = 80$ mW, $P_{r} = 30$ mW, $\omega = 0.044$, $d_1 = 0.5$ m, $d_2 = 10$ m).} \label{fig:MixvsXi}
\end{figure}

\section{Conclusions} \label{sec:concl}
In this paper, we characterized the achievable rate regions of the NOMA system with wireless powered near user for time switching, power splitting and generalized schemes. Under the constant decoding power model, the achievable rate regions of time switching and power splitting are of closed-form expressions. Specifically, there exists a cut-off line on the boundary of the rate region with the time switching scheme, and the boundary of the power splitting scheme is linear if $|h_2|^2 \ge |h_1|^2 - \frac{P_{\mathrm{SIC}}}{\xi P_{\mathrm{max}}}$. In addition, if $|h_1|^2 \le \frac{P_{\mathrm{SIC}}}{\xi P_{\mathrm{max}}}$, UE 1 can not be self powered by the power splitting scheme. The rate region with the generalized scheme can be derived via solving two convex optimization subproblems, and is shown larger than the conventional ones. Under the dynamic decoding power model, the rate region is further expanded with efficient rate-dependent information decoder. Also, the barrier for applying the power splitting scheme, i.e., the feasibility requirement on $h_1$ is broken. UE 1 can support a lower rate with reduced decoding power consumption.

{ Possible extensions of this work are as follows. Firstly, perfect channel state information is assumed in this paper for theoretical analysis. In practice, it would be interesting to study the influence of channel training and feedback overhead. Secondly, joint transmit power and beamforming design for multi-antenna case would be an interesting research direction.}

\appendices

\section{Proof of Lemma \ref{lemma:equal}} \label{proof:equal}
Since $R_2^{(1)}$ is only constrained by (\ref{eq:R2_1}), and $R_2^{(2)}$ is constrained by both (\ref{eq:R2_2}) and (\ref{eq:R2_2sic}), it is obvious that equality in (\ref{eq:R2_1}) and (\ref{eq:R2_2}) (or (\ref{eq:R2_2sic})) should be satisfied for maximization.

The equality of the power constraints (\ref{eq:Pmax_1}) and (\ref{eq:Pmax_2}) can be proved by contradiction. Take (\ref{eq:Pmax_1}) as an example, assume that $ P_2^{(1)} < P_{\mathrm{max}}$ achieves the maximum average rate. Consider a value $\hat P_2^{(1)}$ that satisfies $ P_2^{(1)} < \hat P_2^{(1)} \le P_{\mathrm{max}}$. It does not violate the decoding power constraint (\ref{eq:Psic}). With $\hat P_2^{(1)}$, a higher rate $R_2^{(1)}$ is achieved according to (\ref{eq:R2_1}), and hence a larger objective. It contradicts the optimality assumption of $ P_2^{(1)}$. The equality of (\ref{eq:Pmax_2}) can be proved similarly.

Given the condition that (\ref{eq:Pmax_1})-(\ref{eq:Pmax_2}) and (\ref{eq:R2_2}) (or (\ref{eq:R2_2sic})) are satisfied with equality, the objective (\ref{prob:mixobj}) is an decreasing function of $P_1^{(2)}$. As $P_1^{(2)}$ is lower bounded by (\ref{eq:R1_2}), the objective is minimized when (\ref{eq:R1_2}) is satisfied with equality.

\section{Proof of Lemma \ref{lemma:mono}} \label{proof:mono}
Denote $\alpha = \log_2 \big( 1+ \frac{|h_2|^2P_{\mathrm{max}}}{\sigma^2}\big), \beta = \frac{|h_2|^2}{|h_1|^2}$. As $0 < |h_2|^2 < |h_1|^2$ by assumption, we have $0 < \beta < 1$. Then $f_0(t)$ can be rewritten as
\begin{align}
f_0(t) = \alpha - (1-t) \log_2 ( \beta2^{\frac{r}{1-t}} + 1-\beta).
\end{align}

The first derivative of $f_0(t)$ at $t = 0$ is
\begin{align}
f_0'(0) = & \Big[ \log_2 ( \beta2^{\frac{r}{1-t}} \!+\! 1\!-\!\beta) \!-\! \frac{\beta r 2^{\frac{r}{1-t}}}{(1-t) ( \beta2^{\frac{r}{1-t}} \!+\! 1\!-\!\beta)} \Big] \bigg|_{t = 0} \nonumber\\
= & \frac{(\beta2^r + 1-\beta) \log_2 ( \beta2^r + 1-\beta) - \beta r 2^r}{ \beta2^r + 1-\beta}.
\end{align}
Denote
\begin{align}
g(\beta) = (\beta2^r + 1-\beta) \log_2 ( \beta2^r + 1-\beta) - \beta r 2^r,
\end{align}
we have
\begin{align}
g''(\beta) = \frac{(2^r-1)^2}{(\beta2^r + 1-\beta) \ln 2} \ge 0,
\end{align}
which indicates that $g(\beta)$ is a convex function of $\beta$. Therefore, the maximum of $g(\beta)$ is achieved at the boundary points, i.e.,
\begin{align}
g(\beta) \le \max \{g(0), g(1) \} = 0.
\end{align}
Hence, we have
\begin{align}
f_0'(0) = \frac{g(\beta)}{ \beta2^r + 1-\beta} \le 0.
\end{align}

Notice that the second derivative of function $f_0(t)$ satisfies
\begin{align}
f_0''(t) = -\frac{ \beta (1-\beta) r^2 2^{\frac{r}{1-t}} \ln{2}}{(1-t)^3(\beta 2^{\frac{r}{1-t}} + 1 - \beta)^2} \le 0,
\end{align}
i.e., the first derivative $f_0'(t)$ is non-increasing. Consequently,
\begin{align}
f_0'(t) \le f_0'(0) \le 0, \; \forall 0 \le t < 1,
\end{align}
which means that $f_0(t)$ is a non-increasing function.

\section{Proof of Lemma \ref{lemma:concave1}} \label{proof:concave1}
The proof can be divided into three steps.

\subsubsection{Simplification}
Denote $\alpha = \log_2 \big( 1+ \frac{|h_2|^2P_{\mathrm{max}}}{\sigma^2}\big), \beta = \frac{|h_2|^2}{|h_1|^2}, \zeta = \frac{P_{\mathrm{SIC}}}{\xi |h_1|^2 P_{\mathrm{max}}}$. The function can be rewritten as
\begin{align}
f_1(t) = \alpha - (1-t) \log_2 \Big( \frac{\beta}{\frac{1}{1-t} - \zeta} \big(2^{\frac{r}{1-t}}-1\big) + 1\Big).
\end{align}
Define
\begin{align}
g(x) = \log_2 \Big( \frac{\beta}{x - \zeta} \big(2^{rx}-1\big) + 1\Big).
\end{align}
As $f_1(t) = \alpha - (1-t)g(\frac{1}{1-t})$, according to the property of perspective of a function \cite[Example~3.20]{boyd2004convex}, to prove that the concavity of $f_1(t)$, we only need to prove the convexity of $g(x)$ for $\max\{1, \zeta\} < x < 1+ \zeta$.

\subsubsection{Convexity of $g(x)$}
To prove the convexity of $g(x)$, we introduce two lemmas as follows.
\begin{lemma} \label{lemma:convex}
Define
\begin{align}
g_0(x) =  \frac{K}{x - \zeta} \big(2^{rx}-1\big) + 1,
\end{align}
where $K > 0, \zeta > 0, r > 0$, and $x > 1$. If $0 < \zeta < 1$, the function $g_0(x)$ is convex for all $1 \le x < 1+ \zeta$. If $\zeta \ge 1$, $g_0(x)$ is convex for $\zeta < x < 1+ \zeta$.
\end{lemma}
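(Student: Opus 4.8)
The plan is to strip the problem down to a single scalar inequality and then settle it with an auxiliary one-variable function. Since $g_0(x) = K\,h(x) + 1$ with $h(x) := (2^{rx}-1)/(x-\zeta)$ and $K>0$, the additive constant and the positive scaling leave convexity unchanged, so it is enough to prove that $h$ is convex. The only feature of the stated intervals I expect to need is that $v := x-\zeta$ stays positive: in the case $0<\zeta<1$ one has $x\ge 1>\zeta$, and in the case $\zeta\ge 1$ one has $x>\zeta$ by hypothesis, so $v>0$ throughout. This common bound $v>0$ is precisely what lets me handle both cases in one stroke.

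Next I would differentiate twice. With $u := 2^{rx}-1$ and $v=x-\zeta$ (so $v'=1$, $v''=0$), the quotient rule yields $h'' = N/v^3$, where $N := u''v^2 - 2u'v + 2u$. Because $v^3>0$, the sign of $h''$ coincides with the sign of $N$, and the lemma reduces entirely to showing $N\ge 0$ on the domain.

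The crux is to recast $N$ in a transparent form. Writing $c := r\ln 2$, so that $u'=c\,2^{rx}$ and $u''=c^2 2^{rx}$, and setting $w := cv$, one gets
\begin{align}
N = 2^{rx}\big(w^2 - 2w + 2\big) - 2 = 2^{r\zeta}\,e^{w}\big((w-1)^2+1\big) - 2,
\end{align}
where I used $2^{rx}=2^{r\zeta}2^{rv}$ together with $2^{rv}=e^{cv}=e^{w}$. Since $r,\zeta>0$ give $2^{r\zeta}\ge 1$, and since $e^{w}((w-1)^2+1)\ge 0$, it suffices to prove the clean scalar bound $\psi(w)\ge 2$ for $w\ge 0$, where $\psi(w) := e^{w}\big((w-1)^2+1\big)$. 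This is immediate: $\psi(0)=2$ and $\psi'(w)=w^2 e^{w}\ge 0$, so $\psi$ is non-decreasing and never drops below its value at the origin. As $w = r\ln 2\,(x-\zeta)>0$ on both intervals, $N\ge 0$ follows, and hence $h$ and $g_0$ are convex.

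I expect the main obstacle to be organizational rather than conceptual: the raw numerator $N$ is unwieldy, and the whole argument hinges on spotting the substitution $w=r(x-\zeta)\ln 2$ and the identity $2^{rx}=2^{r\zeta}e^{w}$, which together collapse $N$ into $2^{r\zeta}\psi(w)-2$ and expose the monotone factor $\psi$. The two points I would verify with care are that discarding the factor $2^{r\zeta}\ge 1$ is legitimate (it is, because $\psi\ge 0$) and that the endpoints of both stated intervals genuinely keep $v>0$, so that no separate treatment of the two cases is required.
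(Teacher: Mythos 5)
Your proof is correct, and it verifies the same key identity as the paper --- your numerator $N=u''v^2-2u'v+2u$ equals, after the factor $K/(x-\zeta)^3$, exactly the quantity $g_1(x)=2^{rx}\big((r(x-\zeta)\ln 2-1)^2+1\big)-2$ that the paper's proof analyzes --- but you settle its sign by a genuinely different and cleaner route. The paper shows $g_1'(x)=2^{rx}(r\ln 2)^3(x-\zeta)^2\ge 0$ and then splits into two cases according to where the left endpoint of the domain sits: for $\zeta\ge 1$ it anchors at $x=\zeta$, where $g_1(\zeta)=2(2^{r\zeta}-1)\ge 0$ is immediate, but for $0<\zeta<1$ it anchors at $x=1$, which produces a residual function $g_2(r)=2^r\big((r(1-\zeta)\ln 2-1)^2+1\big)-2$ whose nonnegativity requires a second, nested monotonicity argument in $r$. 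Your substitution $w=r(x-\zeta)\ln 2$ and the factorization $2^{rx}=2^{r\zeta}e^{w}$ collapse both cases into the single inequality $\psi(w)=e^{w}\big((w-1)^2+1\big)\ge 2$ for $w\ge 0$, proved from $\psi(0)=2$ and $\psi'(w)=w^2e^{w}\ge 0$; in effect you observe that $g_1$ is monotone on all of $[\zeta,\infty)$, so one may always anchor at $x=\zeta$ even when $\zeta<1$ lies outside the stated domain, which eliminates the paper's case split and its auxiliary function $g_2$. What your approach buys is uniformity and a one-variable inequality with an obvious proof; what the paper's buys is nothing extra here --- its case (a) is simply a harder-than-necessary way to reach the same conclusion. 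The only points worth stating explicitly in a polished write-up are the quotient-rule identity $h''=(u''v^2-2u'v+2u)/v^3$ and the fact that $v=x-\zeta>0$ on both stated intervals, both of which you have.
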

\begin{proof}
By taking the second derivative of $g_0(x)$, we have
\begin{align}
g_0''(x) = \frac{K}{(x\!-\!\zeta)^3}\Big(2^{rx}\big((r(x\!-\!\zeta)\ln2\!-\!1)^2+1\big)-2\Big).
\end{align}
Denote
\begin{align}
g_1(x) = 2^{rx}\big((r(x-\zeta)\ln2-1)^2+1\big)-2.
\end{align}
Since
\begin{align}
g_1'(x) = 2^{rx}(r\ln2)^3(x-\zeta)^2 > 0,
\end{align}
$g_1(x)$ is an increasing function. We discuss the sign of $g_0''(x)$ for two cases.

\textbf{(a)}: If $0 < \zeta < 1$, we have $x-\zeta>0$ and
\begin{align}
g_1(x) \ge g_1(1) = 2^{r}\big((r(1-\zeta)\ln2-1)^2+1\big)-2 \stackrel{def}{=} g_2(r).
\end{align}
Again, we have
\begin{align}
g_2'(r) = 2^{r}\ln2\big( (r(1-\zeta)\ln2 -\zeta)^2 + 1-(1-\zeta)^2\big) > 0,
\end{align}
i.e., $g_2(r)$ is an increasing function. Therefore, we have $g_2(r) \ge g_2(0) = 0$, which means that $g_1(x) \ge 0$, and hence $g_0''(x) \ge 0$ for all $x>1$.

\textbf{(b)}: If $x > \zeta \ge 1$, we have $g_1(x) > g_1(\zeta) = 2(2^{r\zeta}-1) > 0$, which again, guarantees that $g_0''(x) \ge 0$. As a result, $g_0(x)$ is convex, and hence, the lemma is proved.
\end{proof}
\begin{lemma} \label{lemma:logconvex}
If the convex positive function $g_0(x)$ is logarithmically convex, i.e., $\log_2 g_0(x)$ is convex, we have $g_0(x) + \delta$ is also logarithmically convex for any $\delta > 0$.
\end{lemma}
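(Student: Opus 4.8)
The plan is to recognize Lemma \ref{lemma:logconvex} as a special case of the classical fact that a sum of logarithmically convex functions is again logarithmically convex, taking one summand to be the constant function $\delta$. Since a positive constant is trivially log-convex (its logarithm is affine, hence convex), it suffices to establish closure of log-convexity under addition and then apply it to $g_0$ and $\delta$.

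I would carry this out through the definition of log-convexity rather than through the logarithm directly. Recall that a positive function $f$ is log-convex iff for every $x_1, x_2$ in the domain and every $\lambda \in [0,1]$,
\[
f(\lambda x_1 + (1-\lambda)x_2) \le f(x_1)^\lambda f(x_2)^{1-\lambda}.
\]
First I would write this inequality for $g_0$ (valid by hypothesis) together with the trivial identity $\delta = \delta^\lambda \delta^{1-\lambda}$ for the constant term, then add them to bound $(g_0+\delta)(\lambda x_1 + (1-\lambda)x_2)$ from above by $g_0(x_1)^\lambda g_0(x_2)^{1-\lambda} + \delta^\lambda \delta^{1-\lambda}$. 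The key step is then to apply H\"older's inequality with conjugate exponents $1/\lambda$ and $1/(1-\lambda)$ to the two-term sum, which gives
\[
g_0(x_1)^\lambda g_0(x_2)^{1-\lambda} + \delta^\lambda \delta^{1-\lambda} \le \big(g_0(x_1)+\delta\big)^\lambda \big(g_0(x_2)+\delta\big)^{1-\lambda}.
\]
Chaining the two bounds yields exactly $(g_0+\delta)(\lambda x_1+(1-\lambda)x_2) \le (g_0+\delta)(x_1)^\lambda (g_0+\delta)(x_2)^{1-\lambda}$, i.e. $g_0+\delta$ is log-convex, with the endpoints $\lambda\in\{0,1\}$ treated separately as trivial.

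Because $g_0$ here is smooth, an alternative and arguably shorter route is the second-derivative test: a positive twice-differentiable $f$ is log-convex iff $f f'' \ge (f')^2$. Setting $h = g_0+\delta$ so that $h'=g_0'$ and $h''=g_0''$, I would compute $h h'' - (h')^2 = \big(g_0 g_0'' - (g_0')^2\big) + \delta\, g_0''$; the first bracket is nonnegative by the log-convexity of $g_0$, and $\delta\, g_0'' \ge 0$ since $\delta>0$ and convexity of $g_0$ forces $g_0''\ge 0$, so $h h''-(h')^2\ge 0$ as required.

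I expect the main obstacle to be the correct and rigorous invocation of H\"older's inequality in the first approach, namely matching the exponents to $\lambda$ and $1-\lambda$ and confirming that it applies term by term to the two-element sequences $(g_0(x_1),\delta)$ and $(g_0(x_2),\delta)$. The differentiable route sidesteps this subtlety at the cost of assuming smoothness of $g_0$, which is available in the present setting, so in the write-up I would likely favour that second derivative computation for brevity.
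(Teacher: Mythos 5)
Your second-derivative argument is exactly the paper's proof: it writes the numerator of $(\log_2(g_0(x)+\delta))''$ as $g_0''(x)(g_0(x)+\delta)-(g_0'(x))^2 \ge g_0''(x)\delta \ge 0$, using log-convexity of $g_0$ for the first bound and convexity of $g_0$ for the second, just as you do. Your H\"older route is a valid (and more general) alternative, but since you state you would favour the derivative computation, the proposal matches the paper's approach.
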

\begin{proof}
Since $\log_2 g_0(x)$ is convex, its second derivative is nonnegative. We have
\begin{align}
g_0''(x)g_0(x) - (g_0'(x))^2 \ge 0.
\end{align}
The second derivative of $\log_2 (g_0(x) + \delta)$ is
\begin{align}
g_0''(x)(g_0(x)+\delta) - (g_0'(x))^2 \ge g_0''(x)\delta \ge 0.
\end{align}
Therefore, $\log_2 (g_0(x) + \delta)$ is convex.
\end{proof}
As $g(x) = \log_2(g_0(x) + \frac{K}{\beta} - 1) + \log_2(\frac{\beta}{K})$, based on Lemmas \ref{lemma:convex} and \ref{lemma:logconvex}, $g(x)$ is convex if there exists some $K > \beta$ so that $\log_2g_0(x)$ is convex. The second derivative of $\log_2g_0(x)$ can be expressed as $\frac{g_4(x)}{g_3(x)}$, where the denominator
\begin{align}
g_3(x) = (K2^{rx} + x - \zeta - K)^2 (x-\zeta)^2 \ln2 > 0,
\end{align}
and the nominator
\begin{align}
&g_4(x) = (K2^{rx} + x - \zeta - K)^2 + \nonumber\\
&\; (x\!-\!\zeta)^2 \big( K2^{rx}(r\ln2)^2(x \!-\!\zeta\!-\!K) \!-\! 2K2^{rx}r\ln2 \!-\! 1\big)
\end{align}
can be viewed as a quadratic function of $K$. If the coefficient of $K^2$
\begin{align}
g_5(x) = (2^{rx} - 1)^2 - (x-\zeta)^2 2^{rx} (r\ln2)^2
\end{align}
is positive, as $x$ is bounded, there must exist some $K>\beta$ so that $g_4(x) \ge 0$ holds for all $\zeta < x < 1+\zeta$, which results in $(\log_2g_0(x))'' \ge 0$, and hence, $\log_2g_0(x))$ is convex. Therefore, we only need to prove that $g_5(x) > 0$.

\subsubsection{Proof of $g_5(x) > 0$}
Notice that
\begin{align}
g_5'(x) = \big( 2(2^{rx} \!-\! 1) \!-\! 2(x \!-\! \zeta)r\!\ln\!2 \!-\! ((x \!-\! \zeta)r\ln2)^2 \big) 2^{rx}r\!\ln\!2 .
\end{align}
Denote
\begin{align}
g_6(x) = 2(2^{rx} \!-\! 1) \!-\! 2(x \!-\! \zeta)r\ln2 \!-\! ((x \!-\! \zeta)r\ln2)^2.
\end{align}
As
\begin{align}
g_6''(x) = 2 (2^{rx}-1) (r\ln2)^2 > 0
\end{align}
for all $x > \zeta$, $g_6'(x)$ is increasing, i.e.,
\begin{align}
g_6'(x) = 2(2^{rx} - 1 - (x-\zeta)r\ln2)r\ln2 > g_6'(\zeta) > 0.
\end{align}
As a result, $g_6(x)$ is also an increasing function. We have $g_6(x) > g_6(\zeta) > 0$, which further indicates that $g_5'(x) > 0$. Again, we prove that $g_5(x)$ is increasing and therefore, $g_5(x) > g_5(\zeta) > 0$.

By tracing back from step 3) to step 1), the lemma can be proved.

\section{Proof of Lemma \ref{lemma:equal2}} \label{proof:lemma2}
The equality of (\ref{eq:Pmax_1}) and (\ref{eq:R2_1}) is the same as Lemma \ref{lemma:equal}. We now prove the equality of (\ref{eq:Pmax_2}) by contradiction.

Suppose the strict inequality holds in (\ref{eq:Pmax_2}). Since $Q(\sqrt{\gamma_2})$ is a decreasing function of $\gamma_2$, it is also a decreasing function of $P_2^{(2)}$. Hence, ${\sqrt{-\log_2(2Q(\sqrt{\gamma_2}))}}$ is an increasing function of $P_2^{(2)}$. By properly  increasing $P_2^{(2)}$ and $R_2^{(2)}$ so that the right hand side of (\ref{eq:Psic}) keeps fixed, we can obtain a higher average rate without violating any constraints. It contradicts the optimality assumption, and hence, (\ref{eq:Pmax_2}) must be satisfied with equality.

\bibliographystyle{IEEEtran}
\bibliography{ref}

\begin{IEEEbiography}[{\includegraphics[width=1in,height=1.25in,clip,keepaspectratio]{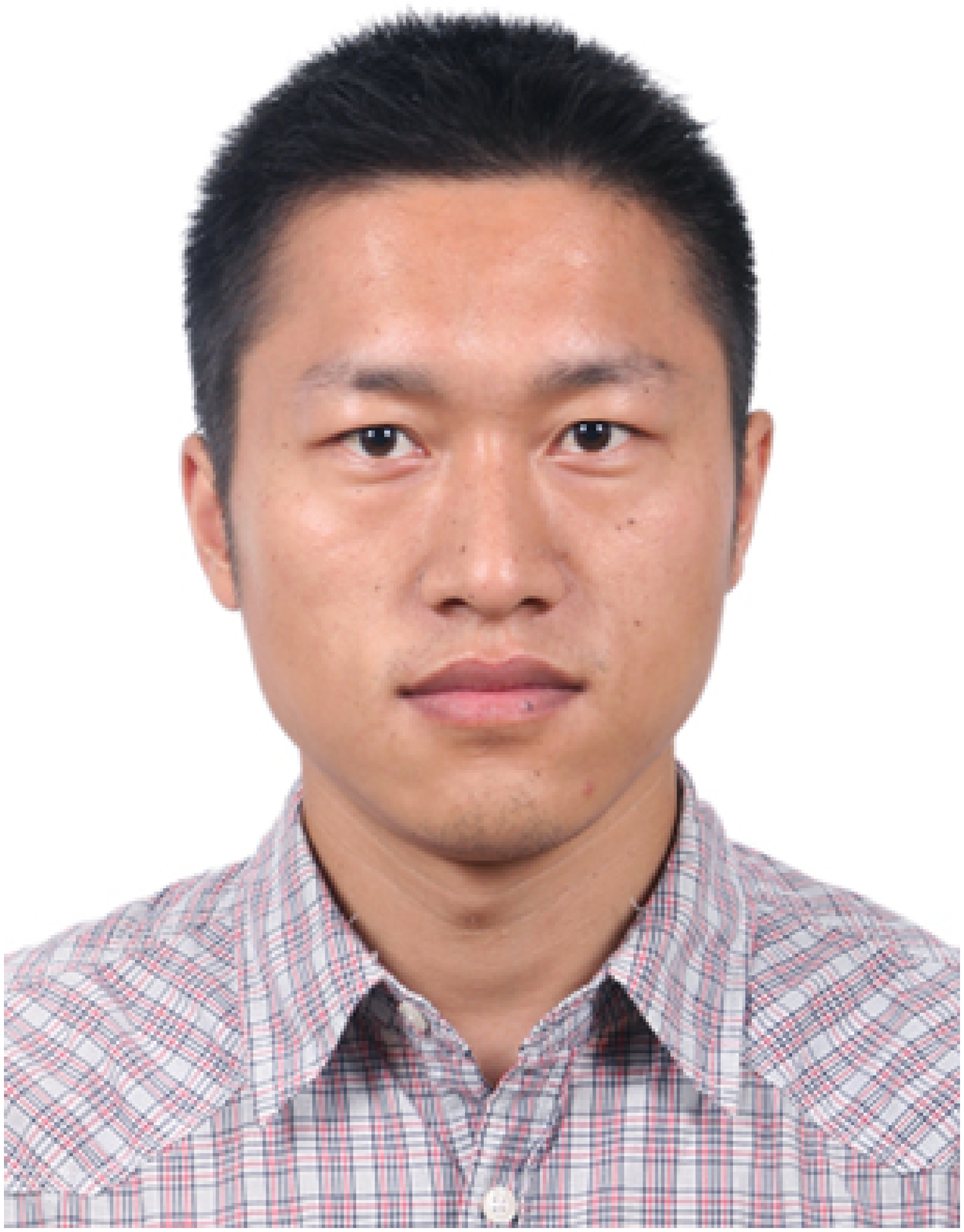}}]{Jie Gong} (S'09, M'13) received his B.S. and Ph.D. degrees in Department of Electronic Engineering in Tsinghua University, Beijing, China, in 2008 and 2013, respectively. From July 2012 to January 2013, he visited Institute of Digital Communications, University of Edinburgh, Edinburgh, UK. During 2013-2015, he worked as a postdoctorial scholar in Department of Electronic Engineering in Tsinghua University, Beijing, China. He is currently an associate research fellow in School of Data and Computer Science, Sun Yat-sen University, Guangzhou, China. He served as workshop co-chair for IEEE ISADS 2015 and TPC member for the IEEE/CIC ICCC 2016/17, the IEEE WCNC 2017, the IEEE Globecom 2017, the IEEE CCNC 2017, and the APCC 2017. He was a co-recipient of the Best Paper Award from IEEE Communications Society Asia-Pacific Board in 2013. He was selected as the IEEE Wireless Communications Letters (WCL) Exemplary Reviewer in 2016. His research interests include Cloud RAN, energy harvesting technology and green wireless communications.
\end{IEEEbiography}

\begin{IEEEbiography}[{\includegraphics[width=1in,height=1.25in,clip,keepaspectratio]{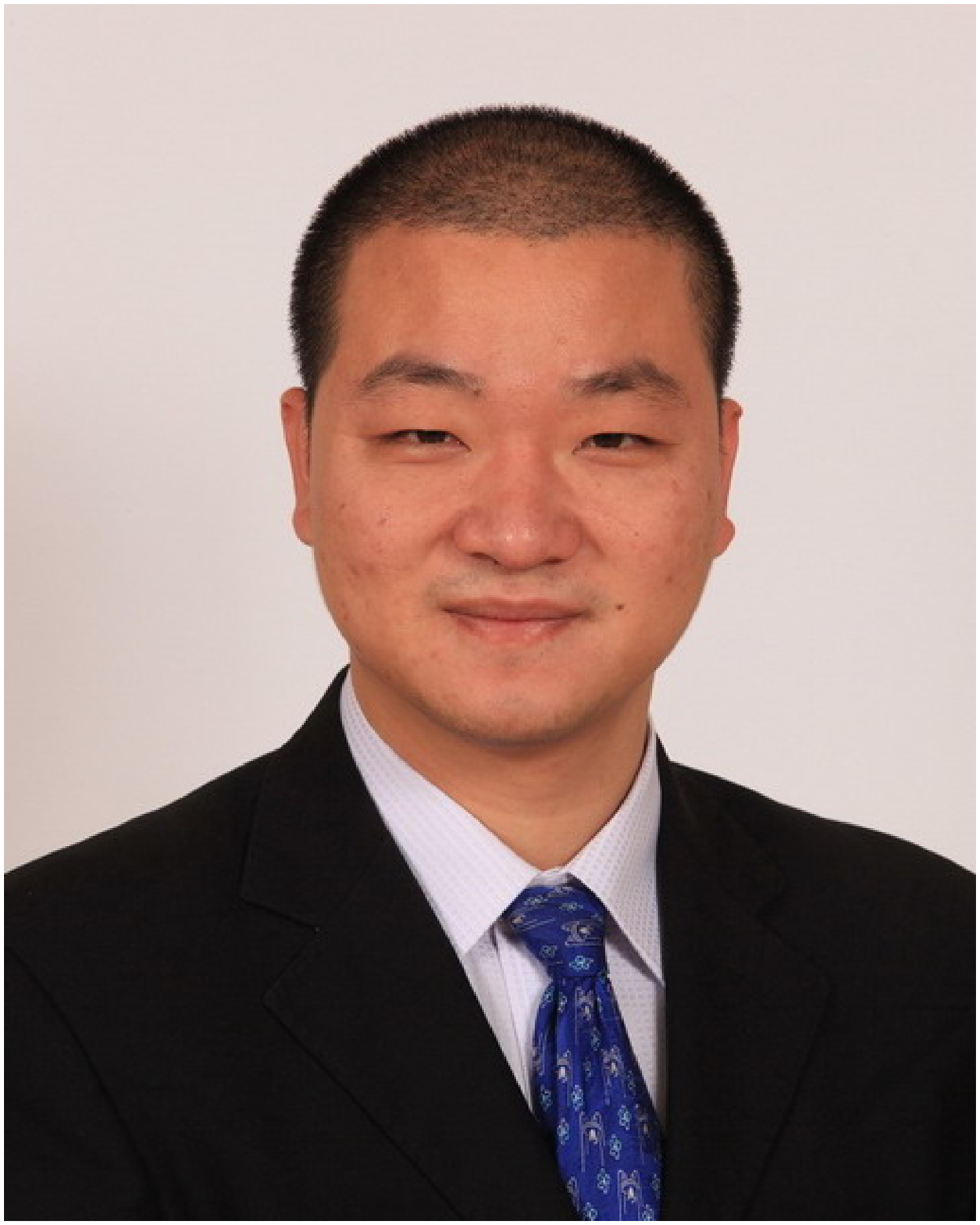}}]{Xiang Chen} (M'08) received the B.E. and Ph.D. degrees both from the Department of Electronic Engineering, Tsinghua University, Beijing, China, in 2002 and 2008, respectively. From July 2008 to July 2012, he was with the Wireless and Mobile Communication Technology R\&D Center, Research Institute of Information Technology in Tsinghua University. From August 2012 to December 2014, was with Tsinghua Space Center, School of Aerospace, Tsinghua University. Since January 2015, he serves as an associate professor at School of Electronics and Information Technology, SYSU-CMU Shunde International Joint Research Institute, Sun Yat-sen University, Guangzhou, China. He is also with Research Institute of Tsinghua University in Shenzhen as a chief researcher (part-time). His research interests mainly focus on 5G wireless communications, Internet of Things, and software radio.
\end{IEEEbiography}


\end{document}